\documentclass{llncs}

\usepackage[table]{xcolor}
\usepackage{wrapfig}
\usepackage{hhline}
\usepackage{multirow}
\usepackage{paralist}
\usepackage{tikz}
\usetikzlibrary{positioning,arrows}
\usepackage{multicol}
\usepackage{mathtools}

\usepackage{ltlfonts} %
\usepackage{xspace}
\usepackage{latexsym}
\usepackage{amsmath,amsfonts,amssymb}
\newcommand{\EXPTIME}{{\sc Exptime}\xspace}

\newcommand{\PSPACE}{{\sc Pspace}\xspace}

\newcommand{\Nat}{{\mathbb{N}}}
\newcommand{\RealP}{{\mathbb{R}_+}}

\DeclareMathAlphabet{\mathpzc}{OT1}{pzc}{m}{it}

\newcommand{\Prop}{\mathcal{P}}

\newcommand{\tpl}[1]{(#1)}
\newcommand{\Lang}{{\mathcal{L}}}
\newcommand{\TLang}{{\mathcal{L}_T}}

\newcommand{\DTPA}{\text{\sffamily DTPA}}
\newcommand{\TA}{\text{\sffamily TA}}
\newcommand{\PDA}{\text{\sffamily PDA}}
\newcommand{\PTA}{\text{\sffamily PTA}}
\newcommand{\VPA}{\text{\sffamily VPA}}
\newcommand{\VPTA}{\text{\sffamily VPTA}}

\newcommand{\VPL}{\text{\sffamily VPL}}
\newcommand{\ECA}{\text{\sffamily ECA}}
\newcommand{\CARET}{\text{\sffamily CaRet}}
\newcommand{\ECNA}{\text{\sffamily ECNA}}
\newcommand{\CECNA}{\text{\sffamily C\_ECNA}}
\newcommand{\APCNA}{\text{\sffamily AP\_ECNA}}
\newcommand{\ARCNA}{\text{\sffamily AR\_ECNA}}
\newcommand{\ECVPA}{\text{\sffamily ECVPA}}

\newcommand{\Eventually}{\textsf{F}}

\newcommand{\Scall}{\Sigma_{\mathit{call}}}
\newcommand{\call}{{\mathit{call}}}
\newcommand{\ret}{{\mathit{ret}}}
\newcommand{\intA}{{\mathit{int}}}
\newcommand{\Sret}{\Sigma_{\mathit{ret}}}
\newcommand{\Sint}{\Sigma_{\mathit{int}}}
\newcommand{\Au}{\ensuremath{\mathcal{A}}}
\newcommand{\der}[1]{\ensuremath{\;\;{\mathop{{ %
            \longrightarrow}}\limits^{{#1}}}\!}\;\;} %

\newcommand{\abs}{\mathsf{a}}
\newcommand{\caller}{\mathsf{c}}
\newcommand{\Global}{\mathsf{g}}
\newcommand{\SUCC}{\mathsf{succ}}
\newcommand{\NULL}{\mathsf{\bot}}
\newcommand{\Pos}{{\mathit{Pos}}}
\newcommand{\val}{{\mathit{val}}}
\newcommand{\sval}{{\mathit{sval}}}
\newcommand{\Const}{{\mathit{Const}}}
\newcommand{\Intv}{{\mathit{Intv}}}
\newcommand{\Reg}{{\mathit{Reg}}}
\newcommand{\reg}{{\mathit{g}}}
\newcommand{\tw}{{\mathit{tw}}}
\newcommand{\Untimed}{{\mathit{Untimed}}}
\newcommand{\Timed}{{\mathit{Timed}}}
\newcommand{\dir}{\textit{dir}}

\newcommand{\Res}{\textit{Res}}

\newcommand{\con}{\textit{con}}
\newcommand{\first}{\textit{first}}
\newcommand{\live}{\textit{live}}
\newcommand{\push}{\textit{push}}
\newcommand{\pop}{\textit{pop}}
\newcommand{\bad}{\textit{bad}}
\newcommand{\new}{\textit{new}}
\newcommand{\Abs}{\textit{Abs}}
\newcommand{\AbsP}{\textit{AbsP}}

\newcommand{\MAP}{\textit{MAP}}
\newcommand{\infix}{\textit{infix}}
\newcommand{\TLangPred}{{\mathcal{L}_T^{\textit{pred}}}}
\newcommand{\TLangRec}{{\mathcal{L}_T^{\textit{rec}}}}
\newcommand{\TLangCaller}{{\mathcal{L}_T^{\textit{caller}}}}
\newcommand{\DefORmini}{\ensuremath{\;\big|\;}}
\newcommand{\true}{\ensuremath{\textup{\texttt{true}}}}
\begin{document}

\mainmatter              
\title{Event-Clock Nested Automata}

\author{Laura Bozzelli  \and Aniello Murano  \and Adriano Peron}
\institute{Universit\`a degli Studi di Napoli Federico II, Italy }

\maketitle              

\begin{abstract}
	In this paper we introduce and study Event-Clock Nested Automata (\ECNA), a formalism that combines Event Clock Automata (\ECA) and Visibly Pushdown Automata (\VPA). \ECNA\ allow to express real-time properties over non-regular patterns of recursive programs. We prove that \ECNA\ retain the 
closure and decidability properties of \ECA\ and \VPA\ being closed under Boolean operations and having a decidable language-inclusion problem. 
In particular, we prove that emptiness, universality, and language-inclusion for \ECNA\ are \EXPTIME-complete problems. As for the expressiveness, we have that \ECNA\ properly extend any previous attempt in the literature of combining \ECA\ and \VPA.
 \end{abstract}

\section{Introduction}

\emph{Model checking} is a well-established formal-method technique to automatically check for global correctness of reactive systems~\cite{Baier}.
In this setting, automata theory over infinite words plays a crucial role: the set of possible (potentially infinite) behaviors of the system and the set of admissible
behaviors of the correctness specification can be modeled as languages accepted by automata. The verification problem of checking that a system meets its specification  then reduces
to testing language inclusion between two automata over infinite words.

In the last two decades,
model checking of pushdown automata (\PDA) has received a lot of attention~\cite{BMP10,MP15,Wal96}.
PDA represent an infinite-state formalism suitable to model the control flow of typical sequential programs with
nested and recursive procedure calls. Although  the general problem of checking context-free properties of \PDA\ is undecidable~\cite{KPV02},
algorithmic solutions have been proposed for interesting subclasses of context-free requirements~\cite{AlurEM04,AlurMadhu04,AlurMadhu09,CMM+03}.
A well-known approach is that of \emph{Visibly Pushdown Automata} (\VPA)~\cite{AlurMadhu04,AlurMadhu09}, a subclass
of \PDA\ where the input symbols over a \emph{pushdown alphabet} control the admissible operations
on the stack. Precisely, the alphabet is partitioned into a set  of \emph{calls}, representing a
 procedure call and forcing a push stack-operation, a set
  of \emph{returns}, representing a procedure return and forcing a pop stack-operation, and a set of \emph{internal}
  actions that cannot access or modify the content of the stack.
%
%
This restriction makes the class of
resulting languages  (\emph{visibly pushdown languages} or \VPL)
very similar in tractability and robustness to that of regular languages~\cite{AlurMadhu04,AlurMadhu09}.
\VPL\ are closed under Boolean operations, and language inclusion is \EXPTIME-complete.
\VPA\ capture all regular properties, and, additionally, allow to specify regular requirements over two kinds of \emph{non-regular}
patterns on input words: \emph{abstract paths} and \emph{caller paths}. An abstract path captures the local computation
within a procedure with the removal of subcomputations corresponding
 to nested procedure calls, while a caller path represents the call-stack content  at a given position of the input.

Recently, many works~\cite{AbdullaAS12,BenerecettiP16,BouajjaniER94,ClementeL15,EmmiM06,TrivediW10}  have investigated real-time extensions of \PDA\  by combining \PDA\ with
 \emph{Timed Automata} (\TA)~\cite{AlurD94}, a model widely used
to represent real-time systems. \TA\ are finite automata augmented with a finite set of real-valued clocks, which operate over words where each symbol is paired with a real-valued timestamp (\emph{timed words}).
All clocks progress at same speed and can
be reset by transitions (thus, each clock
keeps track of the elapsed time since the last reset). Constraints on clocks are associated with transitions to restrict the behavior of the automaton.
The emptiness problem for \TA\ is decidable and \PSPACE\ complete~\cite{AlurD94}.
However, since in \TA, clocks can be reset nondeterministically and independently of each other, the resulting class of timed languages is not closed under complement and, in particular,
language inclusion  is undecidable~\cite{AlurD94}. As a consequence, the
general verification problem (i.e., language inclusion) of formalisms combining unrestricted \TA\ with robust subclasses of \PDA\ such as \VPA\ is undecidable as well.
In fact,  checking language inclusion for \emph{Visibly Pushdown Timed Automata} (\VPTA) is undecidable even in the restricted case of specifications using at most one clock~\cite{EmmiM06}.

\emph{Event-clock automata} (\ECA)~\cite{AlurFH99} are an interesting subclass of \TA\ where the explicit reset of clocks is disallowed. In \ECA, clocks have a predefined association with the input alphabet symbols. Precisely, for each symbol $a$ there are two clocks: the \emph{global recorder clock}, recording the time elapsed since the last occurrence of $a$, and the \emph{global predictor clock}, measuring the time elapsed since the next occurrence of $a$. Hence, the clock valuations are determined only by the input timed word being independent of the automaton behavior. Such a restriction makes the resulting class of timed languages closed under Boolean operations, and in particular, language inclusion  is  \PSPACE-complete~\cite{AlurFH99}. 

Recently, a robust subclass of \VPTA, called \emph{Event-Clock Visibly Pushdown Automata} (\ECVPA), has been proposed in~\cite{TangO09}, combining \ECA\ with \VPA.  \ECVPA\ are closed under Boolean operations, and 
language inclusion 
is \EXPTIME-complete. However, \ECVPA\ do not take into account the nested hierarchical structure induced by a timed word over a pushdown alphabet, namely, they do not provide any explicit mechanism to relate the use of a stack with that of event clocks.
\vspace{-0.3cm}

\paragraph{Our contribution.}
In this paper, we introduce an extension of \ECVPA, called \emph{Event-Clock Nested Automata} (\ECNA) that, differently from \ECVPA, allows to relate the use of event clocks and the use of the stack. To this end, we add for each input symbol $a$ three additional event clocks: the \emph{abstract recorder clock} (resp., \emph{abstract predictor clock}), measuring the time elapsed since the last occurrence (resp., the time for the next occurrence)
of $a$ along the maximal abstract path  visiting the current position; the \emph{caller clock}, measuring the time elapsed
since the last occurrence of $a$ along the caller path from the current position. In this way, \ECNA\ allow to specify relevant real-time non-regular properties including:
\begin{compactitem}
\item Local bounded-time responses 
  such as  ``in the local computation  of a procedure $A$, every request $p$ is followed
  by a response $q$ within $k$ time units".
  \item Bounded-time total correctness requirements  such as ``if the pre-condition $p$ holds when the procedure $A$
  is invoked, then the procedure must return within $k$ time units and $q$ must hold upon return".
  \item Real-time security properties which require the inspection of the call-stack such as ``a module $A$ should be invoked only if module $B$ belongs to the call stack and within
  $k$ time units since the activation of module $B$".
\end{compactitem}
We show that \ECNA\ are strictly more expressive than \ECVPA\ and, as for \ECVPA, the resulting class of languages is closed under all Boolean operations.
Moreover, language inclusion and visibly model-checking of  \VPTA\ against \ECNA\ specifications are decidable and \EXPTIME-complete.
The key step in the proposed decision procedures is a translation
of \ECNA\ into equivalent \VPTA. 

\paragraph{Related work.} Pushdown Timed Automata (\PTA) have been introduced in~\cite{BouajjaniER94}, 
and  their emptiness problem
is 
\EXPTIME-complete. An  extension of \PTA, namely Dense-Timed Pushdown Automata (\DTPA), has been studied in~\cite{AbdullaAS12},  where each symbol in the stack is equipped
with a real-valued
clock representing its `age' (the time elapsed since the symbol has been pushed onto the stack). It  has been shown in~\cite{ClementeL15} that \DTPA\ do not add expressive power and can be translated into equivalent \PTA. Our proposed translation of \ECNA\ into \VPTA\ is inspired from the construction in~\cite{ClementeL15}.  In~\cite{BhaveDKPT16}, an equally-expressive extension of \ECVPA~\cite{TangO09} over finite timed words,  by means of a \emph{timed} stack (like in \DTPA), is investigated. 
\section{Preliminaries}\label{sec:backgr}

In the following, $\Nat$ denotes the set of natural numbers and $\RealP$ the set of non-negative real numbers.
Let $w$ be a finite or infinite word over some alphabet. By $|w|$ we denote the length of $w$ (we set $|w|=\infty$ if $w$ is infinite). For all  $i,j\in\Nat $, with $i\leq j$, $w_i$ is
$i$-th letter of $w$, while $w[i,j]$ is the finite subword
 $w_i\cdots w_j$.

An \emph{infinite timed word} $w$ over a finite alphabet $\Sigma$ is
an infinite word $w=(a_0,\tau_0) (a_1,\tau_1),\ldots$ over $\Sigma\times \RealP$ (intuitively, 
$\tau_i$ is the time at which $a_i$ occurs) such that the sequence $\tau= \tau_0,\tau_1,\ldots$ of timestamps  satisfies: (1) $\tau_i\leq \tau_{i+1}$ for all $i\geq 0$ (monotonicity), and (2) for all $t\in\RealP$, $\tau_i\geq t$ for some $i\geq 0$
(divergence). The timed word $w$ is also denoted by the pair $(\sigma,\tau)$, where $\sigma$ is the untimed word $a_0 a_1\ldots$
and $\tau$ is the sequence of timestamps.
An \emph{$\omega$-timed language} over $\Sigma$ is a set of infinite timed words over $\Sigma$.\vspace{-0.2cm}

\paragraph{Pushdown alphabets, abstract paths, and caller paths.}
A \emph{pushdown alphabet} is
a finite alphabet $\Sigma=\Scall\cup\Sret\cup\Sint$ which is partitioned into a set $\Scall$ of \emph{calls}, a set
$\Sret$ of \emph{returns}, and a set $\Sint$ of \emph{internal
  actions}. The pushdown alphabet $\Sigma$  induces a nested hierarchical structure in a given word over $\Sigma$ obtained by associating
to each call the corresponding matching return (if any) in a well-nested manner. Formally, the set  of \emph{well-matched words} is the set of finite words $\sigma_w$ over
$\Sigma$ inductively defined by the following grammar:
\[
\sigma_w:= \varepsilon \DefORmini
a\cdot \sigma_w \DefORmini
c\cdot \sigma_w \cdot r \cdot \sigma_w
\]
%
where $\varepsilon$ is the empty word, $a\in\Sint$, $c\in \Scall$, and $r\in\Sret$.

Fix an infinite word $\sigma$ over $\Sigma$. For a call position $i\geq 0$,
if there is $j>i$ such that $j$ is a return position of $\sigma$ and
$\sigma[i+1,j-1]$ is a well-matched word (note that $j$ is
uniquely determined if it exists), we say that $j$ is the
\emph{matching return} of $i$ along $\sigma$.
For a position $i\geq 0$,
the \emph{abstract successor of $i$ along $\sigma$}, denoted
  $\SUCC(\abs,\sigma,i)$, is defined as follows:
  \begin{compactitem}
  \item If $i$ is a call,  then
    $\SUCC(\abs,\sigma,i)$ is the matching return of $i$ if such a matching return exists; otherwise
     $\SUCC(\abs,\sigma,i)=\NULL$ ($\NULL$ denotes the
    \emph{undefined} value).
  \item If $i$ is not a call, then $\SUCC(\abs,\sigma,i)=i+1$ if $i+1$ is
    not a return position, and $\SUCC(\abs,\sigma,i)=\NULL$, otherwise.
  \end{compactitem}
The \emph{caller of $i$ along $\sigma$}, denoted   $\SUCC(\caller,\sigma,i)$, is instead defined as follows:
\begin{compactitem}
  \item if there exists the greatest call position $j_c<i$ such that either   $\SUCC(\abs,\sigma,j_c)=\NULL$ or
  $\SUCC(\abs,\sigma,j_c)>i$, then  $\SUCC(\caller,\sigma,i)=j_c$; otherwise, $\SUCC(\caller,\sigma,i)=\NULL$.
  \end{compactitem}
A \emph{maximal abstract path} (\MAP) of $\sigma$ is a \emph{maximal} (finite or infinite) increasing sequence
of natural numbers $\nu= i_0<i_1<\ldots$ such that
$i_j=\SUCC(\abs,\sigma,i_{j-1})$ for all $1\leq j<|\nu|$. Note that for every position $i$ of $\sigma$, there is exactly one \MAP\ of $\sigma$ visiting
position $i$.  For each $i\geq 0$, the \emph{caller path of $\sigma$ from position $i$}, is the maximal
(finite) decreasing sequence of natural numbers $j_0>j_1\ldots >j_n$ such that $j_0= i$ and $j_{h+1}=\SUCC(\caller,\sigma,j_{h})$ for all $0\leq h <n$.
Note that the positions of a \MAP\ have the same caller (if any). Intuitively, in the analysis of recursive programs, a maximal  abstract path
captures the local computation within a
procedure removing computation fragments corresponding to nested
calls, while the caller path represents the call-stack
content at a given position of the input.  

For instance, consider the finite untimed word $\sigma_p$ of length $10$ depicted below 
where $\Scall =\{c\}$, $\Sret =\{r\}$, and $\Sint =\{\i\}$.
\begin{figure}
\centering
\vspace{-0.6cm}
\begin{tikzpicture}[scale=1]

\node (Word) at (-0.2,0.2) {};
\coordinate [label=left:{\footnotesize  $\sigma_p $\,\,$=$}] (Word) at (0.0,0.17);
\path[thin,black] (-0.1,0.2) edge   (10.1,0.2);

\node (NodeZero) at (0.0,0.0) {};
\coordinate [label=center:{\footnotesize  $0$}] (NodeZero) at (0.0,0.0);
\coordinate [label=below:{\footnotesize   \textbf{c}}] (NodeZero) at (0.0,-0.15);

\node (NodeOne) at (1.0,0.0) {};
\coordinate [label=center:{\footnotesize  $1$}] (NodeOne) at (1.0,0.0);
\coordinate [label=below:{\footnotesize   \textbf{c}}] (NodeOne) at (1.0,-0.15);

\node (NodeTwo) at (2.0,0.0) {};
\coordinate [label=center:{\footnotesize  $2$}] (NodeTwo) at (2.0,0.0);
\coordinate [label=below:{\footnotesize   \textbf{\i}}] (NodeTwo) at (2.0,-0.15);

\node (SourceOne) at (0.95,0.11) {};
\node (SourceSix) at (6.05,0.12) {};
\node (SourceSeven) at (7.0,0.10) {};
\node (SourceNine) at (9.0,0.10) {};
\node (SourceThree) at (3.0,0.10) {};
\node (SourceFive) at (5.0,0.10) {};


\draw[->,thick,black] (SourceOne) .. controls (2.0,1.0) and  (5.0,1.0)  ..   (SourceSix);
\draw[->,thick,black] (SourceSeven) .. controls (7.5,1.0) and  (8.5,1.0)  ..   (SourceNine);

\node (NodeThree) at (3.0,0.0) {};
\coordinate [label=center:{\footnotesize  $3$}] (NodeThree) at (3.0,0.0);
\coordinate [label=below:{\footnotesize   \textbf{c}}] (NodeThree) at (3.0,-0.15);

\node (NodeFour) at (4.0,0.0) {};
\coordinate [label=center:{\footnotesize  $4$}] (NodeFour) at (4.0,0.0);
\coordinate [label=below:{\footnotesize   \textbf{\i}}] (NodeFour) at (4.0,-0.15);

\node (NodeFive) at (5.0,0.0) {};
\coordinate [label=center:{\footnotesize  $5$}] (NodeFive) at (5.0,0.0);
\coordinate [label=below:{\footnotesize   \textbf{r}}] (NodeFive) at (5.0,-0.15);

\draw[->,thick,black] (SourceThree) .. controls (3.5,0.6) and  (4.5,0.6)  .. node[above] {}  (SourceFive);

\node (NodeSix) at (6.0,0.0) {};
\coordinate [label=center:{\footnotesize  $6$}] (NodeSix) at (6.0,0.0);
\coordinate [label=below:{\footnotesize   \textbf{r}}] (NodeSix) at (6.0,-0.15);

\node (NodeSeven) at (7.0,0.0) {};
\coordinate [label=center:{\footnotesize  $7$}] (NodeSeven) at (7.0,0.0);
\coordinate [label=below:{\footnotesize   \textbf{c}}] (NodeSeven) at (7.0,-0.15);

\node (NodeEight) at (8.0,0.0) {};
\coordinate [label=center:{\footnotesize  $8$}] (NodeEight) at (8.0,0.0);
\coordinate [label=below:{\footnotesize   \textbf{\i}}] (NodeEight) at (8.0,-0.15);

\node (NodeNine) at (9.0,0.0) {};
\coordinate [label=center:{\footnotesize  $9$}] (NodeNine) at (9.0,0.0);
\coordinate [label=below:{\footnotesize   \textbf{r}}] (NodeNine) at (9.0,-0.15);

\node (NodeTen) at (7.0,0.0) {};
\coordinate [label=center:{\footnotesize  $10$}] (NodeTen) at (10.0,0.0);
\coordinate [label=below:{\footnotesize   \textbf{\i}}] (NodeTen) at (10.0,-0.15);

\end{tikzpicture}
\vspace{-0.7cm}
\end{figure}

\noindent Let $\sigma$ be $\sigma_p \cdot \i^{\omega}$. 
Note that $0$ is the unique unmatched call position of $\sigma$: hence, the \MAP\ visiting $0$ consists of just position $0$ and has no caller. The \MAP\ visiting position $1$ is the infinite sequence $1,6,7,9,10,11,12,13 \ldots$ and the associated caller is position $0$; 
the \MAP\ visiting position $2$ is the sequence $2,3,5$ and the associated caller is position $1$, and the \MAP\ visiting position $4$ consists of just position $4$ whose caller path is $4,3,1,0$.

\section{Event-clock nested automata}

In this section, we define the formalism of \emph{Event-Clock Nested Automata} (\ECNA), which allow a combined used of event clocks and visible operations on the stack. To this end, we augment the standard set of event clocks~\cite{AlurFH99} with a set of \emph{abstract event clocks} and a set of \emph{caller event clocks} whose values are determined by considering maximal abstract paths and caller paths of the given word, respectively. 

In the following, we fix a pushdown alphabet $\Sigma=\Scall\cup\Sret\cup\Sint$. The set $C_{\Sigma}$ of event clocks associated
with $\Sigma$ is given by $C_{\Sigma}:= \bigcup_{b\in\Sigma} \{x^{\Global}_b,y_b^{\Global},x_b^{\abs},y_b^{\abs},x_b^{\caller}\}$. Thus, we associate with each symbol $b\in \Sigma$, five event clocks:
the \emph{global recorder clock $x^{\Global}_b$} (resp., the \emph{global predictor clock $y^{\Global}_b$}) recording the time elapsed since the last occurrence of $b$ if any
(resp., the time required to the next occurrence of $b$ if any);  the \emph{abstract recorder clock $x_b^{\abs}$} (resp., the \emph{abstract predictor clock $y_b^{\abs}$}) recording the time elapsed since the last occurrence of $b$ if any (resp. the time required to the next occurrence of $b$) along the 
\MAP\ visiting the current position; and the \emph{caller (recorder) clock} $x_b^{\caller}$ recording the time elapsed since the last occurrence of $b$ if any along the caller path from the current position.
Let $w=(\sigma,\tau)$ be an infinite timed word over $\Sigma$ and $i\geq 0$. We denote by  $\Pos(\abs,w,i)$ the set of positions visited by the \MAP\ of $\sigma$ associated with position $i$, and
by  $\Pos(\caller,w,i)$ the set of positions visited by the caller path of $\sigma$ from position $i$. In order to allow a uniform notation, we write $\Pos(\Global,w,i)$ to mean the full set $\Nat$ of positions.
The values of the clocks at a fixed position $i$ of the word $w$ can be deterministically determined as follows.

\begin{definition}[Determinisitic clock valuations] \emph{A \emph{clock valuation} over $C_{\Sigma}$ is a mapping $\val: C_{\Sigma} \mapsto \RealP\cup \{\NULL\}$, assigning to each event clock a value in
$\RealP\cup \{\NULL\}$ ($\NULL$ denotes the \emph{undefined} value).
Given an infinite timed word $w=(\sigma,\tau)$ over
 $\Sigma$ and a position $i$, the \emph{clock valuation $\val^{w}_i$ over $C_{\Sigma}$}, specifying the values of the event clocks at position $i$ along $w$, is defined as follows for each $b\in\Sigma$, where $\dir \in \{\Global,\abs, \caller\}$ and $\dir'\in \{\Global,\abs\}$:}
\[
\begin{array}{l}
\val^{w}_i(x_b^{\dir})  =  \left\{
\begin{array}{ll}
\tau_i - \tau_j
&    \text{ if } \exists j<i:\, b= \sigma_j,\, j\in\Pos(\dir,w,i),  \text{ and}\,
\\
& \,\,\,\,\,\,\forall k: (j < k < i   \text{ and }\, k\in\Pos(\dir,w,i))  \Rightarrow b\neq \sigma_k\\
\NULL
&    \text{ otherwise }
\end{array}
\right.
\vspace{0.2cm}\\
\val^{w}_i(y_b^{\dir'})  =  \left\{
\begin{array}{ll}
\tau_j - \tau_i
&    \text{ if } \exists j>i:\, b= \sigma_j,\, j\in\Pos(\dir',w,i),  \text{ and}\,
\\
& \,\,\,\,\,\,\forall k: (i < k < j   \text{ and }\, k\in\Pos(\dir',w,i))  \Rightarrow b\neq \sigma_k\\
\NULL
&    \text{ otherwise }
\end{array}
\right.
\end{array}
\]
\end{definition}

It is worth noting that while the values of the global clocks are obtained by considering the full set of positions in $w$, the values of the abstract clocks (resp., caller clocks) are defined with respect to the \MAP\ visiting the current position (resp., with respect to the caller path from the current position). 

For  $C\subseteq C_{\Sigma}$ and a clock valuation
$\val$ over $C_{\Sigma}$, $\val_{\mid C}$ denotes the restriction of $\val$ to the set $C$.
 A \emph{clock constraint} over $C$ is a conjunction of atomic formulas of the form
$z \in I$, where $z\in C$, and
$I$ is either an interval in $\RealP$ with bounds in $\Nat\cup\{\infty\}$, or the singleton $\{\NULL\}$ (also denoted by $[\NULL,\NULL]$).
For a clock valuation $\val$ and a clock constraint $\theta$, $\val$ satisfies $\theta$, written
$\val\models \theta$, if for each conjunct $z\in I$ of $\theta$, $\val(z)\in I$. We denote by $\Phi(C)$ the set of clock constraints over $C$.

For technical convenience, we first introduce an extension of the known class of \emph{Visibly Pushdown Timed Automata} (\VPTA) \cite{BouajjaniER94,EmmiM06},
called \emph{nested} \VPTA. Nested \VPTA\ are simply \VPTA\ augmented with event clocks. Therefore, transitions of \emph{nested} \VPTA\ are constrained by a pair of disjoint finite sets of clocks: a finite set $C_{st}$ of standard clocks and a disjoint set $C\subseteq C_{\Sigma}$ of event clocks. As usual, a standard clock  can be reset when a transition is taken; hence, its value 
at a position of an input word depends in general on the behaviour on the automaton and not only, as for event clocks, on the word.

The class of \emph{Event-Clock Nested Automata} (\ECNA) corresponds to the subclass of nested \VPTA\ where the set of standard clocks  $C_{st}$ is empty.

 A (standard) clock valuation over $C_{st}$ is a mapping $\sval: C_{st} \mapsto \RealP$ (note that the undefined value $\NULL$ is not admitted). For $t\in\RealP$ and a reset set $\Res\subseteq C_{st}$, $\sval+ t$ and $\sval[\Res]$ denote  the  valuations over $C_{st}$ defined as follows for all $z\in C_{st}$: $(\sval +t)(z) = \sval(z)+t$, and $\sval[\Res](z)=0$ if $z\in \Res$ and $\sval[\Res](z)=\sval(z)$ otherwise.
For $C\subseteq C_\Sigma$ and a valuation $\val$ over $C$, $\val\cup \sval$ denotes the valuation over $C_{st}\cup C$ defined in the obvious way.

\begin{definition}[Nested \VPTA]\emph{ A B\"{u}chi \emph{nested} \VPTA\   over  $\Sigma= \Scall\cup \Sint \cup \Sret$ is a tuple
$\Au=\tpl{\Sigma, Q,Q_{0},D=C\cup C_{st},\Gamma\cup\{\top\},\Delta,F}$, where $Q$ is a finite
set of (control) states, $Q_{0}\subseteq Q$ is a set of initial
states, $C\subseteq C_{\Sigma}$ is a set of event clocks, $C_{st}$ is a set of standard clocks disjunct from $C_{\Sigma}$, $\Gamma\cup\{\top\}$ is a finite stack alphabet,  $\top\notin\Gamma$ is the special \emph{stack bottom
  symbol}, $F\subseteq Q$ is a set of accepting states, and $\Delta_c\cup \Delta_r\cup \Delta_i$ is a transition relation, where ($D= C\cup C_{st}$):
  \begin{compactitem}
    \item $\Delta_c\subseteq Q\times \Scall \times \Phi(D) \times 2^{C_{st}} \times Q \times \Gamma$ is the set of \emph{push transitions},
    \item $\Delta_r\subseteq Q\times \Sret  \times \Phi(D) \times 2^{C_{st}} \times (\Gamma\cup \{\top\}) \times Q $ is the set of \emph{pop transitions},
       \item $\Delta_i\subseteq Q\times \Sint\times \Phi(D) \times 2^{C_{st}} \times Q $ is the set of \emph{internal transitions}.
  \end{compactitem}}
\end{definition}

 We  now describe how a nested \VPTA\ $\Au$ behaves over an infinite timed word $w$. Assume that on reading the $i$-th position of $w$, the current state of $\Au$ is $q$,  $\val^{w}_i$ is the event-clock valuation associated with  $w$ and 
 $i$,  $\sval$ is the current valuation of the standard clocks in $C_{st}$, and $t=\tau_i-\tau_{i-1}$ is the time elapsed from the last transition (where $\tau_{-1}=0$). 
If $\Au$ reads a call $c\in \Scall$,  it chooses a push transition of the
form $(q,c,\theta,\Res,q',\gamma)\in\Delta_c$ and pushes the symbol $\gamma\neq \bot$ onto the
stack. If $\Au$ reads a return $r\in\Sret$,  it chooses a pop transition of the
form $(q,r,\theta,\Res,\gamma,q')\in\Delta_r$ such that $\gamma$ is the symbol on top of the stack, and
pops $\gamma$ from the stack (if $\gamma=\bot$, then $\gamma$ is read but not removed). Finally, on reading an internal action $a\in\Sint$, $\Au$ chooses an internal transition of the
form $(q,a,\theta,\Res, q')\in\Delta_i$, and, in this case, there is no operation on the stack. Moreover, in all the cases, the constraint $\theta$
of the chosen transition must be fulfilled  by the  valuation $(\sval+t)\cup (\val^{w}_i)_{\mid C}$, the control changes from $q$ to $q'$, and all the standard clocks in $\Res$ are reset (i.e., the  valuation of the standard clocks is updated to $(\sval +t)[\Res]$).


Formally, a configuration of $\Au$ is a triple $(q,\beta,\sval)$, where $q\in Q$,
$\beta\in\Gamma^*\cdot\{\top\}$ is a stack content, and $\sval$ is a valuation over $C_{st}$.
A run $\pi$ of $\Au$ over $w=(\sigma,\tau)$
is an infinite sequence  of configurations
 $\pi=(q_0,\beta_0,\sval_0),(q_1,\beta_1,\sval_1),\ldots
$ such that $q_0\in Q_{0}$,
$\beta_0=\top$, $\sval_0(z)=0$ for all $z\in C_{st}$ (initialization requirement), and the following holds
for all $i\geq 0$, where $t_i=\tau_i-\tau_{i-1}$ ($\tau_{-1}=0$):
\begin{itemize}
\item \textbf{Push:} If $\sigma_i \in \Scall$, then for some $(q_i,\sigma_i,\theta,\Res,q_{i+1},\gamma)\in\Delta_c$,
  $\beta_{i+1}=\gamma\cdot \beta_i$, $\sval_{i+1}= (\sval_i +t_i)[\Res]$,  and $(\sval_i +t_i)\cup(\val^{w}_i)_{\mid C}\models \theta$.
\item \textbf{Pop:} If $\sigma_i \in \Sret$, then for some
  $(q_i,\sigma_i,\theta,\Res,\gamma,q_{i+1})\in\Delta_r$, $\sval_{i+1}= (\sval_i +t_i)[\Res]$, $(\sval_i +t_i)\cup(\val^{w}_i)_{\mid C}\models \theta$,  and \emph{either}
  $\gamma\neq\bot$ and $\beta_{i}=\gamma\cdot \beta_{i+1}$, \emph{or}
  $\gamma=\beta_{i}= \beta_{i+1}=\top$.
\item \textbf{Internal:} If $\sigma_i \in \Sint$, then for some
  $(q_i,\sigma_i,\theta,\Res,q_{i+1})\in\Delta_i$, $\beta_{i+1}=\beta_i$, $\sval_{i+1}= (\sval_i +t_i)[\Res]$, and $(\sval_i +t_i)\cup(\val^{w}_i)_{\mid C}\models \theta$.
\end{itemize}
The run $\pi$ is \emph{accepting} if  there are infinitely many positions $i\geq 0$ such that $q_i\in F$.
The \emph{timed language} $\TLang(\Au)$ of $\Au$ is the set of infinite timed words $w$ over $\Sigma$
 such that there is an accepting run of $\Au$ on $w$. The \emph{greatest constant of $\Au$}, denoted $K_\Au$, is the greatest natural number used as bound in some clock constraint of $\Au$.
For technical convenience, we also consider nested \VPTA\ equipped with a \emph{generalized B\"{u}chi acceptance condition} $\mathcal{F}$ consisting of a family of sets of accepting states. In such a setting, a run $\pi$ is accepting if for each B\"{u}chi component $F\in \mathcal{F}$, the run $\pi$ visits infinitely often states in $F$.

  A \VPTA\ \cite{EmmiM06} corresponds to  a nested \VPTA\ whose set $C$ of event clocks is empty.
An \emph{\ECNA} is a  nested \VPTA\ whose set $C_{st}$ of standard clocks is empty. For
\ECNA, we can omit the reset component $\Res$ from the transition function and the valuation component $\sval$ from each configuration
$(q,\beta,\sval)$. Note the the class of Event-Clock Visibly Pushdown Automata  (\ECVPA)~\cite{TangO09}   corresponds to the subclass of  \ECNA\ where abstract and caller event-clocks are disallowed.
We also consider three additional subclasses of \ECNA: \emph{abstract predicting} \ECNA\ (\APCNA, for short) which do not use  abstract recorder clocks and caller clocks,
 \emph{abstract recording} \ECNA\ (\ARCNA, for short) which do not use  abstract predictor clocks and caller clocks, and \emph{caller} \ECNA\ (\CECNA, for short) which do not use abstract clocks. Note that these three subclasses of \ECNA\ subsume \ECVPA.

\begin{example}\label{example:ECNA} Let us consider the  \ARCNA\ 
depicted below, where $\Scall =\{c\}$, $\Sret =\{r\}$, and $\Sint=\{a,b,\i\}$. The control part of the transition relation ensures that
for each accepted word, the \MAP\ visiting the $b$-position associated with the transition $tr$ from $q_4$ to $q_5$ cannot visit the $a$-positions following the call positions. This implies that the  abstract recorder constraint
$x^{\abs}_{a}=1$ associated with $tr$ is fulfilled \emph{only if} all the 
 occurrences of calls $c$ and returns $r$ are matched. 
\begin{figure}
\centering
\vspace{-0.5cm}
\begin{tikzpicture}[scale=1]

\node[shape=circle,draw=black,inner sep=7pt,fill=white](InitNode) at (0.0,0.0) {};
\coordinate [label=center:{\footnotesize  $q_0$}] (LabelInit) at (0.0,0.0);

\node (Start) at (-1.0,0.0) {};
\path[->, thick,black] (Start) edge (InitNode);

\node[shape=circle,draw=black,inner sep=7pt,fill=white](NodeOne) at (1.5,0.0) {};
\coordinate [label=center:{\footnotesize  $q_1$}] (LabelOne) at (1.5,0.0);
\path[->, thick,black, loop above] (NodeOne) edge node[right] {\footnotesize \,  $c$, $\push(c)$}  (NodeOne);

\path[->, thick,black] (InitNode) edge node[above] {\footnotesize $a$} (NodeOne);

\node[shape=circle,draw=black,inner sep=7pt,fill=white](NodeTwo) at (4.0,0.0) {};
\coordinate [label=center:{\footnotesize  $q_2$}] (LabelTwo) at (4.0,0.0);
\path[->, thick,black, loop above] (NodeTwo) edge node[right] {\footnotesize \,  $a$}  (NodeTwo);

\path[->, thick,black] (NodeOne) edge node[above] {\footnotesize $c$, $\push(c)$ } (NodeTwo);

\node[shape=circle,draw=black,inner sep=7pt,fill=white](NodeThree) at (5.5,0.0) {};
\coordinate [label=center:{\footnotesize  $q_3$}] (LabelThree) at (5.5,0.0);
\path[->, thick,black, loop above] (NodeThree) edge node[right] {\footnotesize \,  $r$, $\pop(c)$}  (NodeThree);

\path[->, thick,black] (NodeTwo) edge node[above] {\footnotesize $a$} (NodeThree);

\node[shape=circle,draw=black,inner sep=7pt,fill=white](NodeFour) at (8.0,0.0) {};
\coordinate [label=center:{\footnotesize  $q_4$}] (LabelFour) at (8.0,0.0);
\path[->, thick,black, loop above] (NodeFour) edge node[right] {\footnotesize \,$b$}  (NodeFour);

\path[->, thick,black] (NodeThree) edge node[above] {\footnotesize $r$, $\pop(c)$ } (NodeFour);

\node[shape=circle,draw=black,inner sep=7pt,fill=white](NodeFive) at (10.5,0.0) {};
\node[shape=circle,draw=black,inner sep=6pt,fill=white](InternalCircleFive) at (10.5,0.0) {};
\coordinate [label=center:{\footnotesize  $q_5$}] (LabelFive) at (10.5,0.0);

\path[->, thick,black, loop above] (NodeFive) edge node[right] {\footnotesize\,$\i$}  (NodeFive);

\path[->, thick,black] (NodeFour) edge node[above] {\footnotesize $b$, $x_a^{\abs}=1$ } (NodeFive);

\end{tikzpicture}
\vspace{-0.5cm}
\end{figure} 
Hence,  constraint $x^{\abs}_{a}=1$ 
ensures that the accepted language, denoted by $\TLangRec$,   consists of all the timed words of the form $(\sigma,\tau)\cdot (\i^{\omega},\tau')$ such that $\sigma$ is a well-matched word of the form $a \cdot c^{+} \cdot a^{+} \cdot r^{+} \cdot b^{+}$ and the time difference in $(\sigma,\tau)$  between the first and last symbols is $1$, i.e. $\tau_{|\sigma|-1}-\tau_0=1$. The example shows that $\ECNA$ allow  to express a meaningful real-time property of recursive systems, namely the ability of bounding the time required to perform an internal activity consisting of an unbounded number of returning recursive procedure calls.

Similarly, it is easy to define an  \APCNA\ accepting the timed language, denoted by $\TLangPred$, consisting of all the timed words of the form $(\sigma,\tau)\cdot (\i^{\omega},\tau')$ such that $\sigma$ is a well-matched word of the form $a^{+}  \cdot  c^{+} \cdot b^{+} \cdot r^{+} \cdot b$ and the time difference in $(\sigma,\tau)$  between the two extreme symbols is $1$.

Finally, as an example of language which can be defined by a \CECNA, we consider the timed language  $\TLangCaller$  consisting of the timed words of the form $(c,t_0)\cdot (\sigma,\tau)\cdot (\i^{\omega},\tau')$ such that $\sigma$ is a well-matched word of the form $a \cdot c^{+} \cdot a^{+} \cdot r^{+} \cdot b^{+}$ and the time difference in $(c,t_0)\cdot(\sigma,\tau)$  between the first and last symbols is $1$.
\end{example}

\paragraph{Closure properties of B\"{u}chi \ECNA.} As stated in the following theorem, the class of languages accepted by B\"{u}chi \ECNA\ is closed under Boolean operations. The proof exploits a technique similar to that used in~\cite{TangO09} to prove the analogous closure properties for \ECVPA\ (for details, see Appendix~\ref{APP:closureProperties}).

 \newcounter{theo-closureProperties}
\setcounter{theo-closureProperties}{\value{theorem}}

 \begin{theorem}\label{theorem:closureProperties} The class of $\omega$-timed languages accepted by B\"{u}chi
\ECNA\ is closed under union, intersection, and complementation. In particular, given two  B\"{u}chi
\ECNA\ $\Au=\tpl{\Sigma, Q,Q_{0},C,\Gamma\cup\{\top\},\Delta,F}$  and $\Au '=\tpl{\Sigma, Q',Q'_{0},C',\Gamma'\cup\{\top\},\Delta',F'}$ over $\Sigma_\Prop$, one can contruct
\begin{compactitem}
  \item a B\"{u}chi \ECNA\ accepting $\TLang(\Au)\cup \TLang(\Au')$ with $|Q|+|Q'|$ states, $|\Gamma|+|\Gamma'|+1$ stacks symbols, and greatest constant $\max(K_{\Au},K_{\Au'})$;
    \item a B\"{u}chi \ECNA\ accepting $\TLang(\Au)\cap \TLang(\Au')$ with $2|Q||Q'|$ states, $|\Gamma|
    |\Gamma'|$ stacks symbols,  and greatest constant $\max(K_{\Au},K_{\Au'})$;
      \item a B\"{u}chi \ECNA\ accepting the complement of $\TLang(\Au)$ with $2^{O(n^{2})}$ states,  $O(2^{O(n^{2})}\cdot |\Scall|\cdot |\Const|^{O(|\Sigma|)})$ stack symbols, and greatest constant
      $K_\Au$, where $n=|Q|$ and $\Const$ is the set of constants used in the clock constraints of $\Au$.
\end{compactitem}
\end{theorem}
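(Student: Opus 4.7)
The plan is to handle union and intersection by direct automaton constructions, and to reduce complementation to B\"{u}chi visibly pushdown complementation after passing through a region-style abstraction of event clocks, mirroring the strategy of \cite{TangO09} for \ECVPA.

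For union, I would form the essentially disjoint union of $\Au$ and $\Au'$: disjoint state sets, disjoint stack alphabets $\Gamma$ and $\Gamma'$ sharing only the bottom symbol $\top$, initial set $Q_0\cup Q_0'$, transition relation given by the union of $\Delta$ and $\Delta'$, and accepting set $F\cup F'$. Because event-clock values depend only on the timed word and not on the run, no synchronization between the two components is required, and the bounds $|Q|+|Q'|$ states, $|\Gamma|+|\Gamma'|+1$ stack symbols, and greatest constant $\max(K_\Au,K_{\Au'})$ follow immediately.

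For intersection, I would use the standard B\"{u}chi-flag product: states $Q\times Q'\times\{1,2\}$, stack alphabet $\Gamma\times\Gamma'$ plus a shared $\top$, and synchronous transitions firing whenever the two components agree on a transition for the current input symbol, with the conjunction of their clock constraints attached. The flag toggles from $1$ to $2$ upon visiting $F$ and from $2$ to $1$ upon visiting $F'$, and the accepting set is $F\times Q'\times\{1\}$. This yields $2|Q||Q'|$ states, $|\Gamma||\Gamma'|$ non-bottom stack symbols, and greatest constant $\max(K_\Au,K_{\Au'})$.

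For complementation, the key observation is that the valuation $\val^{w}_i$ at each position of a timed word $w$ is fully determined by $w$, so the only nondeterminism in an \ECNA\ lies in its control/stack behavior. I would exploit this by abstracting clock valuations into regions bounded by $K_\Au$: the number of regions over $C_\Sigma$ is $|\Const|^{O(|\Sigma|)}$. Annotating each input symbol with a consistent region yields an extended pushdown alphabet of size $|\Sigma|\cdot|\Const|^{O(|\Sigma|)}$, and the set of annotated words that are region-consistent with some timed word can be recognized by a small B\"{u}chi \VPA. Replacing each transition of $\Au$ by transitions over the annotated alphabet whose region satisfies the original clock constraint produces a B\"{u}chi \VPA\ $\Au_{\text{abs}}$ whose language, after projection and interpretation as a timed word, coincides with $\TLang(\Au)$. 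I would then complement $\Au_{\text{abs}}$ using the standard B\"{u}chi \VPA\ complementation (which incurs a $2^{O(n^2)}$ blowup in states and a factor $|\Scall|$ in stack symbols), intersect with the consistency automaton, and re-interpret each annotated symbol as a pair (plain symbol, event-clock constraint matching the region) to obtain a B\"{u}chi \ECNA\ accepting the complement of $\TLang(\Au)$.

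The main obstacle is the complementation step: since B\"{u}chi \VPA\ are not determinizable, one cannot simply complement by state complementation. The construction of \cite{TangO09} works around this through a Ramsey/summary-based argument at the level of the region-abstracted automaton, and adapting it to the richer clock family $C_\Sigma$ of \ECNA\ (global, abstract, and caller clocks) while preserving the claimed bounds, in particular the $2^{O(n^2)}$ state count and the $O(2^{O(n^2)}\cdot|\Scall|\cdot|\Const|^{O(|\Sigma|)})$ stack alphabet, is where the care is required. The non-regular nature of abstract and caller paths does not create a new difficulty here, because these paths are already encoded in the visibly pushdown structure of the annotated alphabet, so the abstraction/complementation/de-abstraction pipeline goes through essentially unchanged.
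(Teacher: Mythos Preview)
Your proposal is correct and follows essentially the same route as the paper: disjoint union for union, the B\"{u}chi flag product for intersection, and the region-abstraction $\rightarrow$ B\"{u}chi \VPA\ complementation $\rightarrow$ de-abstraction pipeline for complementation. One simplification the paper exploits that you can adopt: the consistency automaton is unnecessary, because the region-annotated words partition the set of timed words (every timed word has a unique region-annotated representative, and an ``inconsistent'' annotated word simply has empty timed-word image under $\tw$), so complementing over the full annotated alphabet $\Lambda^{\omega}$ and then applying the timed homomorphism already yields exactly the complement of $\TLang(\Au)$.
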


\paragraph{Expressiveness results.} We now summarize the expressiveness results for \ECNA. First of all, the timed languages $\TLangRec$, $\TLangPred$, and $\TLangCaller$ considered in Example~\ref{example:ECNA} and definable by   \ARCNA, \APCNA, and \CECNA, respectively,  can be used to prove that the three subclasses \ARCNA,  \APCNA, and \CECNA\ of \ECNA\ are mutually incomparable. Hence, these subclasses strictly include the class of \ECVPA\ and are strictly included in \ECNA. 
The incomparability result directly follows  
from Proposition~\ref{prop:expressivenessPredRec} below, whose proof 
is in Appendix~\ref{APP:expressivenessPredRec}.

As for \ECNA, we have that they are less expressive than B\"{u}chi \VPTA. In fact, by Theorem~\ref{theorem:FromECNAtoVPTA} in Section~\ref{sec:DecisionProcedures}, B\"{u}chi \ECNA\ can be converted into equivalent B\"{u}chi \VPTA. 
The inclusion is strict since, while  B\"{u}chi \ECNA\ are closed under complementation (Theorem~\ref{theorem:closureProperties}),   B\"{u}chi \VPTA\ are not~\cite{EmmiM06}.

In~\cite{BhaveDKPT16}, an equally-expressive extension of \ECVPA\ over finite timed words,  by means of a \emph{timed} stack, is investigated. The B\"{u}chi version of
	such an extension can be trivially encoded in  B\"{u}chi \ARCNA. Moreover, the proof of Proposition~\ref{prop:expressivenessPredRec} can also be used for showing that
	B\"{u}chi  \ECVPA\ with timed stack are less expressive than B\"{u}chi \ARCNA,  B\"{u}chi \APCNA, and B\"{u}chi \CECNA.
	
\noindent The general picture of the expressiveness results is summarized by Theorem \ref{th:express}.

 \newcounter{prop-expressivenessPredRec}
\setcounter{prop-expressivenessPredRec}{\value{proposition}}

\begin{proposition}\label{prop:expressivenessPredRec} The language $\TLangRec$ is \emph{not} definable by  B\"{u}chi \ECNA\ which do \emph{not} use abstract recorder clocks,  $\TLangPred$  is \emph{not} definable by B\"{u}chi \ECNA\ which do \emph{not} use abstract predictor clocks, and $\TLangCaller$ is \emph{not} definable by   B\"{u}chi \ECNA\ which do \emph{not} use caller clocks. 
Moreover, the language $\TLangRec\cup \TLangPred\cup \TLangCaller$ is \emph{not} definable by  B\"{u}chi \ARCNA,
   B\"{u}chi \APCNA\ and   B\"{u}chi \CECNA.
\end{proposition}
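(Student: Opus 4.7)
The core of the argument is that, for each of the three witness languages, exactly one class of event clocks witnesses the defining timing constraint at the relevant position. In $\TLangRec$, the constraint $\tau_{|\sigma|-1}-\tau_0=1$ is read at the final $b$: every recorder clock there reaches back only to events after the middle $a^{+}$ block, every predictor clock is $\NULL$ (the future is $\i^{\omega}$), every caller clock is $\NULL$ (the final $b$ sits at top level because $\sigma$ is well-matched); only $x_a^{\abs}$ reaches $\tau_0$, along the maximal abstract path that skips the whole call--return nest. Symmetric inspections show that only $y_b^{\abs}$ can witness $\tau_{|\sigma|-1}-\tau_0=1$ at the first $a$ of a $\TLangPred$-word, and only $x_c^{\caller}$ can witness $\tau_{|\sigma|}-t_0=1$ at the final $b$ of a $\TLangCaller$-word, where $t_0$ is the timestamp of the leading unmatched call.

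For each of the first three claims I would assume, towards a contradiction, that a B\"uchi \ECNA\ $\Au$ in the restricted class accepts the language, and then exhibit two timed words $w,w'$ sharing the untimed skeleton of the language such that $w$ belongs to it, $w'$ does not, and no event clock in $\Au$'s repertoire takes a different region at any position in $w$ versus in $w'$. The construction of $w'$ from $w$ is a region-and-state pumping step: choose $w$ with a very long interior well-matched $c^{+}$--$r^{+}$ block so that along the accepting run on $w$ a pigeonhole argument (on the finite product of control states, stack-top symbols, and region classes of the available clocks) forces two matched $c$-positions to coincide in both state and region; a region-pumping lemma then lets one stretch the nested time interval between them by a chosen $\delta>0$ while keeping $\Au$'s run unchanged. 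The stretching is calibrated to shift $\tau_{|\sigma|-1}-\tau_0$ by $\delta$, producing $w'$ that is outside the language but still accepted, a contradiction. The proofs for $\TLangPred$ and $\TLangCaller$ are fully symmetric, pumping the interior blocks that flank, respectively, the terminal $b$ and the leading unmatched $c$.

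For the fourth claim, observe that each of the three subclasses omits the exact class of clocks used in one of the three single-language pumps. An \ARCNA\ lacks both abstract predictor and caller clocks, so the $\TLangPred$-pump applies to it: starting from $w\in \TLangPred$ it produces $w'$ with the same untimed skeleton $a^{+}c^{+}b^{+}r^{+}b$ and altered time difference, hence $w'\notin \TLangPred$; since the three untimed skeletons are pairwise different, $w'\notin \TLangRec\cup \TLangCaller$ either, so $w'\notin \TLangRec\cup \TLangPred\cup \TLangCaller$, yet any accepting run of a putative \ARCNA\ $\Au$ on $w$ extends to one on $w'$, contradicting $\TLang(\Au)=\TLangRec\cup \TLangPred\cup \TLangCaller$. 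The $\TLangCaller$-pump handles \APCNA\ and the $\TLangRec$-pump handles \CECNA\ in the same way.

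The main obstacle is the pumping step itself: one must perturb a single, precisely chosen time difference by $\delta>0$ while leaving the region of every clock in the restricted repertoire invariant at every position, and while respecting both the well-matching of the VPA stack discipline and the B\"uchi condition, which requires accepting states to recur infinitely often along the $\i^{\omega}$ tail. Identifying which interior block to stretch, and how to balance its added duration against all the global recorder, global predictor, and surviving abstract or caller clocks that cross that block, is the delicate step and is what dictates the particular shapes of the three witness languages.
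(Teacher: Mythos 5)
Your overall strategy for the fourth claim (reuse the single-language witnesses, noting the three untimed skeletons are pairwise distinct) matches the paper, but the core of your argument for the first three claims has a genuine gap: the entire mathematical content of the proposition is the step you defer as ``the delicate step''. You never construct the pair $w,w'$ nor prove that every clock outside the forbidden class stays constraint-equivalent at every position after the perturbation; you only assert that a ``region-pumping lemma'' calibrated to shift $\tau_{|\sigma|-1}-\tau_0$ exists. That is precisely what must be exhibited, because global recorder and predictor clocks span the stretched block, and nothing in your pigeonhole setup controls them. Moreover, the pigeonhole on (control state, stack-top, region) is beside the point for \ECNA: there are no standard clocks, so clock valuations are determined by the input word alone, and a run over a retimed word with the \emph{same} untimed part can be reused verbatim provided the constraints still evaluate the same way. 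Hence no repetition of configurations is needed, and finding one does not help, since it says nothing about constraint satisfaction at the other positions affected by the retiming (this also makes your worry about the B\"uchi condition moot). As stated, the proposal is a proof outline whose decisive lemma is missing, and the scaffolding it does supply does not bear on that lemma.

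For comparison, the paper avoids pumping altogether: for each bound $H$ it fixes two concrete words with identical untimed part, e.g.\ for $\TLangRec$ the prefixes $(a,0)(c,0.1)(a,0.1)(r,0.1)(b,0.1)(b,0.9)$ and $(a,0)(c,0.1)(a,0.1)(r,0.1)(b,0.1)(b,1)$ followed by $(\i,H{+}2)(\i,H{+}3)\cdots$, and checks that at every position each clock that is not an abstract recorder clock either takes equal values in the two words, or lies in $(0,1)$ in both, or exceeds $H$ in both; only $x_a^{\abs}$ at the last $b$ separates them (0.9 versus 1), because the interior $a$ shields the first $a$ from the global recorder, the \MAP\ skips the interior $a$, and the last $b$ has no caller. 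Constraint-equivalence of the valuations then immediately transfers any accepting run from one word to the other, and since exactly one of the two words is in $\TLangRec$, no \ECNA\ without abstract recorder clocks and with constants at most $H$ can define it; the $\TLangPred$ and $\TLangCaller$ cases are analogous. If you want to salvage your route, you should replace the pumping language by exactly this kind of explicit two-word indistinguishability argument, i.e.\ specify the timestamps and verify region-equivalence clock by clock.
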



%

\begin{theorem}\label{th:express} The classes \ARCNA, \APCNA, and \CECNA\ are mutually incomparable, and $\APCNA \cup \ARCNA \cup \CECNA \subset  \ECNA $. Moreover,  \vspace{-0.2cm}
\[
\begin{array}{ll}
  (1) \,\, \ECVPA \subset \ARCNA          & \quad  (2) \,\, \ECVPA \subset \APCNA   \hspace{5cm}\\
  (3) \,\, \ECVPA \subset \CECNA  & \quad (4)\,\, \ECNA   \subset \VPTA \hspace{5cm}
\end{array}
\]
\end{theorem}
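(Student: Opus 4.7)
The plan is to assemble the theorem from ingredients already in hand: the witness languages $\TLangRec$, $\TLangPred$, $\TLangCaller$ from Example~\ref{example:ECNA}; the separation Proposition~\ref{prop:expressivenessPredRec}; the Boolean closure Theorem~\ref{theorem:closureProperties}; the forward-referenced translation Theorem~\ref{theorem:FromECNAtoVPTA}; and the known fact from~\cite{EmmiM06} that B\"{u}chi \VPTA\ are not closed under complementation. The argument is mostly a careful matching of each witness language with the subclass it escapes.

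For the mutual incomparability of \ARCNA, \APCNA, and \CECNA, I would use the three witness languages in parallel. By Example~\ref{example:ECNA}, $\TLangRec \in \ARCNA$, $\TLangPred \in \APCNA$, and $\TLangCaller \in \CECNA$. By Proposition~\ref{prop:expressivenessPredRec}, $\TLangRec$ is undefinable by any B\"{u}chi \ECNA\ lacking abstract recorder clocks and hence lies outside both \APCNA\ and \CECNA; symmetrically, $\TLangPred$ escapes \ARCNA\ and \CECNA, and $\TLangCaller$ escapes \ARCNA\ and \APCNA. This yields pairwise incomparability. For items (1)--(3), the inclusions $\ECVPA \subseteq \ARCNA$, $\ECVPA \subseteq \APCNA$, and $\ECVPA \subseteq \CECNA$ are syntactic, since an \ECVPA\ uses only the global event clocks permitted by each subclass. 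Strictness in each case is witnessed by the same three languages: each of $\TLangRec$, $\TLangPred$, $\TLangCaller$ lies outside \ECVPA\ (because \ECVPA\ lies inside every one of the forbidden-clock subclasses of Proposition~\ref{prop:expressivenessPredRec}), yet each belongs to the corresponding richer subclass.

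For $\APCNA \cup \ARCNA \cup \CECNA \subset \ECNA$, inclusion is again syntactic. For strictness I would take $L = \TLangRec \cup \TLangPred \cup \TLangCaller$: each disjunct is definable by a B\"{u}chi \ECNA, so by the union construction of Theorem~\ref{theorem:closureProperties}, $L$ is accepted by a B\"{u}chi \ECNA; by the second part of Proposition~\ref{prop:expressivenessPredRec}, $L$ is undefinable by any B\"{u}chi \ARCNA, \APCNA, or \CECNA, giving the strict inclusion. Finally, for (4), the inclusion $\ECNA \subseteq \VPTA$ is exactly Theorem~\ref{theorem:FromECNAtoVPTA}. Strictness follows by a standard closure transfer: the class of \ECNA-recognisable languages is closed under complement by Theorem~\ref{theorem:closureProperties}, whereas the class of B\"{u}chi \VPTA-recognisable languages is not~\cite{EmmiM06}; an equality of the two classes would transport complementation closure to \VPTA, a contradiction, so $\ECNA \subsetneq \VPTA$.

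The substantive mathematical work is concentrated in Proposition~\ref{prop:expressivenessPredRec}, which the excerpt quarantines in an appendix and allows us to cite; once that proposition is in hand, the theorem is essentially a bookkeeping exercise, so I do not anticipate a genuine obstacle beyond being disciplined about which witness language rules out which subclass.
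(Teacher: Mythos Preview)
Your proposal is correct and matches the paper's own argument essentially line for line: mutual incomparability and items (1)--(3) via the witness languages of Example~\ref{example:ECNA} together with Proposition~\ref{prop:expressivenessPredRec}; the strict containment of the union of the three subclasses in \ECNA\ via the union language and the second clause of Proposition~\ref{prop:expressivenessPredRec}; and item~(4) via Theorem~\ref{theorem:FromECNAtoVPTA} plus the complementation-closure mismatch with \cite{EmmiM06}. The paper presents this reasoning in the prose preceding the theorem rather than as a formal proof block, but the content is identical to what you wrote.
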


Note that the expressiveness results above also hold for the automata version over finite timed words.


\section{Decision procedures for B\"{u}chi \ECNA}\label{sec:DecisionProcedures}
In this section, we investigate the following decision problems:
\vspace{-0.2cm}
\begin{itemize}
  \item Emptiness,  universality, and language inclusion for B\"{u}chi \ECNA.
  \item \emph{Visibly model-checking problem against B\"{u}chi \ECNA:} given a \emph{visibly pushdown timed system $\mathcal{S}$} over $\Sigma$ (that is a B\"{u}chi \VPTA\ where all the states are accepting)
  and a B\"{u}chi \ECNA\ $\Au$ over  $\Sigma$,   does $\TLang(\mathcal{S})\subseteq \TLang(\Au)$ hold?
\end{itemize}
\vspace{-0.1cm}
We establish that the above problems are decidable and \EXPTIME-complete. The key intermediate result is an exponential-time translation
of B\"{u}chi \ECNA\ into language-equivalent generalized B\"{u}chi \VPTA.
 More precisely, we show that event clocks in \emph{nested} \VPTA\ can be removed with a single exponential blow-up.

\begin{theorem}[Removal of event clocks from nested \VPTA]\label{theorem:FromECNAtoVPTA}
Given a generalized B\"{u}chi \emph{nested} \VPTA\ $\Au$, one can construct in singly exponential time a generalized B\"{u}chi   \VPTA\ $\Au'$ (which do not use event clocks) such that
$\TLang(\Au')=\TLang(\Au)$ and $K_{\Au'}=K_{\Au}$. Moreover, $\Au'$ has $n\cdot 2^{O(p\cdot |\Sigma|)}$ states and $m + O(p)$ clocks, where $n$ is the number of $\Au$-states, $m$ is the number
of standard $\Au$-clocks, and $p$ is the number of \emph{event-clock} atomic constraints used by $\Au$.
\end{theorem}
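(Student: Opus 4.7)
The overall strategy I would pursue is to exploit the fact that the event-clock valuation $\val^w_i$ at every position $i$ of an input timed word $w$ depends only on $w$, never on the run of $\Au$. This suggests building $\Au'$ so that, at each position, it nondeterministically guesses the abstraction of $\val^w_i$ relevant to $\Au$'s constraints (namely, a truth-assignment to the $p$ event-clock atomic constraints occurring in $\Au$) and then verifies the guesses using a small number of standard clocks together with symbolic information saved onto and retrieved from the stack. The state of $\Au'$ will be a pair $(q,g)$ where $q$ ranges over states of $\Au$ and $g$ is the current guess, augmented with, for each symbol $b\in\Sigma$, the set of predictor atomic constraints about $b$ that are currently \emph{pending}, i.e.\ were committed at some earlier position and are waiting for the next $b$ on the appropriate path to discharge them. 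This per-symbol bookkeeping accounts for the bound of $n \cdot 2^{O(p\cdot|\Sigma|)}$ control states.

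I would treat global, abstract, and caller event clocks in a uniform guess-and-verify style, differing only in how history and future along the relevant path are tracked. For global clocks the treatment is the classical Alur--Fix--Henzinger technique: introduce one fresh standard clock per atomic constraint, reset either whenever the relevant symbol is read (recorder case) or nondeterministically at a committing position and checked against the atomic interval at the next occurrence (predictor case). Abstract clocks are handled likewise, except that every push transition of $\Au'$ piggybacks onto $\Au$'s stack symbol the entire abstract portion of $g$ and re-initializes that portion in the new control state, reflecting that a fresh \MAP\ is entered inside the callee; at the matching pop the saved abstract guess is restored from the stack top, so that pending obligations committed before the call continue to be tracked along the outer \MAP. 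For caller clocks the needed recorder snapshot is already sitting on top of the stack (since pushes save the current signature), so constraints of the form $x_b^\caller \in I$ reduce to comparing that stored snapshot against a standard clock measuring the elapsed time since the most recent push.

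The main obstacle, and the place where correctness becomes delicate, is showing that abstract obligations are correctly propagated across arbitrary sequences of nested calls: a standard clock reset at some position $i$ on a \MAP\ must remain comparable, after possibly many intervening push/pop pairs, to the timestamp of the next $b$ on that \emph{same} \MAP. Since such a clock keeps running through the nested procedures while the matching bit of $g$ travels on the stack and is restored only at the matching pop, no fresh standard clocks are needed per nesting level, and the total count stays at $m + O(p)$. Correctness is then established by a standard bisimulation-style argument: every accepting run of $\Au$ on $w$ lifts to an accepting run of $\Au'$ by letting $\Au'$ guess, at each position, exactly the true abstraction of $\val^w_i$; conversely, any accepting run of $\Au'$ forces these guesses to coincide with the actual event-clock values (otherwise one of the verification constraints would fail), and thus projects back to an accepting run of $\Au$. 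The whole construction is computable in singly exponential time, matching the stated bounds.
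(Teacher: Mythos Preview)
Your high-level plan---guess at each position the truth of the $p$ event-clock atomic constraints, push the abstract portion of the guess at calls and restore it at pops, and verify the guesses with $O(p)$ fresh standard clocks---is broadly in the spirit of the paper's construction, and the clock count $m+O(p)$ is what the paper achieves. But there is a genuine gap in your argument for why $O(p)$ standard clocks suffice, and the missing idea is exactly the one the paper isolates.

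The difficulty is this: a single abstract atomic constraint, say $y^{\abs}_b\in I$, may be asserted on an outer \MAP\ and, before the obligation is discharged, again on an inner \MAP\ entered by a call. You correctly note that the discrete obligation can be parked on the stack and restored at the pop, while the clock ``keeps running''. But the same fresh clock $z$ must also serve the inner \MAP's instance of the same constraint, and for that it has to be reset inside the callee; once you reset it, the outer obligation's timing reference is gone (clock values are real and cannot be saved on the stack, whose alphabet is finite). Your proposal does not say how this conflict is resolved. The paper's key device is to split every interval constraint into a lower-bound half $y^{\abs}_b\succ\ell$ and an upper-bound half $y^{\abs}_b\prec u$, introduce one clock for each half, and treat the two halves asymmetrically: for lower bounds the clock is always reset at the most recent assertion (inner satisfaction implies outer satisfaction, since elapsed time only grows), whereas for upper bounds the clock is \emph{not} reset when an outer obligation is already live (the outer, earlier-started obligation subsumes the inner one). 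This monotonicity argument is what makes a single clock per half-constraint reusable across arbitrary nesting depth; without it the $O(p)$ bound is unjustified.

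A second omission is liveness: you must guarantee that every pending predictor obligation along the unique infinite \MAP\ (if any) is eventually discharged. The paper handles this with an auxiliary proposition $p_\infty$ marking positions whose \MAP\ is not confined under a matched call, together with an additional generalized B\"uchi component; your bisimulation sketch does not address why a run of $\Au'$ cannot indefinitely postpone the witnessing $b$-position along an infinite \MAP. Finally, your treatment of caller clocks (``comparing the stored snapshot against a standard clock measuring the elapsed time since the most recent push'') is too quick: $x^{\caller}_b$ refers to the last $b$ along the caller path, not to the most recent push, and again no real-valued snapshot can live on the stack; the paper uses the same lower/upper obligation machinery there.
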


In the following we sketch a proof of Theorem~\ref{theorem:FromECNAtoVPTA}. Basically, the result follows
from a sequence of transformation steps all preserving language equivalence.
At each step, an event clock is replaced by a set of fresh standard  clocks.
To remove global event clocks we use the technique from~\cite{AlurFH99}.
Here, we focus on the removal of an \emph{abstract predictor} clock $y^{\abs}_b$ with $b\in \Sigma$, referring to Appendix~\ref{APP:RemovalAbstractRecorder} and~\ref{APP:RemovalCallerClocs} for the treatment of abstract recorder clocks and caller clocks.

Fix a generalized B\"{u}chi nested \VPTA\ $\Au=\tpl{\Sigma, Q,Q_{0},C\cup C_{st},\Gamma\cup\{\top\},\Delta,\mathcal{F}}$ such that $y^{\abs}_b \in C$.
By exploiting nondeterminism, we can assume that for each transition $\textit{tr}$ of $\Au$, there is exactly one atomic constraint $y^{\abs}_b\in I$ involving $y^{\abs}_b$ used as conjunct in the clock constraint
of $\textit{tr}$.  If $I\neq \{\NULL\}$, then   $y^{\abs}_b\in I$ is equivalent to a constraint of the form $y_b^{\abs}\succ \ell \wedge y_b^{\abs} \prec u$, where $\succ \in \{>,\geq\}$, $\prec\in \{<,\leq\}$,
$\ell \in \Nat$, and $u\in \Nat\cup\{\infty\}$. We call $y_b^{\abs}\succ \ell$ (resp., $y_b^{\abs} \prec u$) a lower-bound (resp., upper-bound) constraint. Note that if $u=\infty$ , the constraint $y_b^{\abs} \prec u$
is always fulfilled, but we include it to have a uniform notation.
We construct a generalized B\"{u}chi nested \VPTA\ $\Au_{y^{\abs}_b }$ equivalent to $\Au$ whose set of event clocks is  $C\setminus \{y^{\abs}_b\}$, and whose set of standard clocks is
$C_{st}\cup C_\new$, where $C_\new$ consists of the  fresh standard clocks $z_{\succ \ell}$ (resp., $z_{\prec u}$), for each lower-bound constraint  $y_b^{\abs}\succ \ell$  (resp., upper-bound constraint $y_b^{\abs} \prec u$)
of $\Au$ involving $y_b^{\abs}$.

We now report the basic ideas of the translation. Consider a lower-bound constraint $y_b^{\abs} \succ \ell$. Assume that a prediction $y_b^{\abs} \succ \ell$ is done by $\Au$
at position $i$ of the input word for the first time. Then, the simulating automaton $\Au_{y^{\abs}_b}$ exploits the standard clock $z_{\succ \ell}$
to check that the prediction
holds by resetting it at position $i$. Moreover, if $i$ is not a call (resp., $i$ is a call), $\Au_{y^{\abs}_b}$ carries the obligation $\succ$$\ell$ in its control state (resp., pushes the obligation $\succ$$\ell$ onto the stack)
in order to check that the constraint  $z_{\succ \ell}\succ \ell$ holds when the next $b$ occurs at a position $j_{\textit{check}}$ along the \MAP\ $\nu$ visiting position $i$.
We observe that:
\begin{compactitem}
  \item if a new prediction $y_b^{\abs} \succ \ell$ is done by $\Au$ at a position $j>i$ of   $\nu$ strictly preceding  $j_{\textit{check}}$, $\Au_{y^{\abs}_b}$ resets
  the clock $z_{\succ \ell}$ at position $j$ rewriting the old obligation. This is safe since the fulfillment of the lower-bound prediction $y_b^{\abs} \succ \ell$ at $j$  guarantees that prediction
  $y_b^{\abs} \succ \ell$ is fulfilled at $i$ along $\nu$.
  \item If a call position $i_c\geq i$ occurs in $\nu$ before $j_{\textit{check}}$, the next position of $i_c$  in $\nu$ is the matching return $i_r$ of $i_c$, and any \MAP\ visiting a position
  $h\in [i_c+1,i_r-1]$ is finite and ends at a position $k<i_r$. Thus, the clock $z_{\succ \ell}$ can be safely reset to check the prediction $y_b^{\abs} \succ \ell$ raised in  positions in $[i_c+1,i_r-1]$ since this check  ensures that $z_{\succ \ell}\succ \ell$ holds at position $j_{\textit{check}}$.
\end{compactitem}\vspace{0.1cm}

\noindent Thus, previous obligations on a constraint $y_b^{\abs} \succ \ell$ are always rewritten by more recent ones. At each position $i$, $\Au_{y^{\abs}_b}$ records in its control state the lower-bound obligations for the current \MAP\ $\nu$ (i.e., the \MAP\ visiting the current position $i$). Whenever  a call $i_c$ occurs, the lower-bound obligations are pushed on the stack in order to be recovered at the matching return $i_r$. If $i_c+1$ is not a return (i.e., $i_r\neq i_c+1$), then $\Au_{y^{\abs}_b}$ moves to a control state having an empty set of lower-bound obligations (position $i_c+1$ starts the \MAP\ visiting $i_c+1$).

The treatment of an upper-bound constraint $y_b^{\abs}  \prec u$ is symmetric. Whenever a prediction $y_b^{\abs}  \prec u$ is done by $\Au$ at a position $i$, and the simulating automaton $\Au_{y^{\abs}_b}$ has no obligation  on the constraint $y_b^{\abs}  \prec u$, $\Au_{y^{\abs}_b}$ resets the standard clock $z_{\prec u}$.  If $i$ is not a call (resp., $i$ is a call) the fresh obligation ($\first$,$\prec$$u$) is recorded in the control state (resp., ($\first$,$\prec$$u$) is pushed  onto the stack). When, along the \MAP\ $\nu$ visiting position $i$, the next $b$ occurs at a position $j_{\textit{check}}$,   the constraint  $z_{\prec u}\prec u$ is checked, and the obligation ($\first$,$\prec$$u$) is removed or confirmed (in the latter case, resetting the clock $z_{\prec u}$), depending on whether the prediction $y_b^{\abs}  \prec u$ is asserted at position  $j_{\textit{check}}$ or not. We observe that:
\begin{compactitem}
  \item if a new prediction $y_b^{\abs} \prec u$ occurs in a position $j>i$ of   $\nu$ strictly preceding  $j_{\textit{check}}$, $\Au_{y^{\abs}_b}$ simply ignores it (the clock $z_{\prec u}$  is not reset at position $j$) since checking the  prediction $y_b^{\abs} \prec u$ at the previous position $i$ guarantees the fulfillment of the
  prediction  $y_b^{\abs} \prec u$ at the position $j>i$ along $\nu$.
  \item If a call position $i_c\geq i$ occurs in $\nu$ before $j_{\textit{check}}$, then all the predictions $y_b^{\abs}  \prec u$ occurring in a \MAP\ visiting a position
  $h\in [i_c+1,i_r-1]$, with $i_r\leq j_{\textit{check}}$ being the matching-return of $i_c$, can be safely ignored (i.e., $z_{\prec u}$ is not reset there) since they are subsumed by the prediction at position $i$.
\end{compactitem}\vspace{0.1cm}

\noindent Thus, for new obligations on an upper-bound constraint $y_b^{\abs} \prec u$, the clock $z_{\prec u}$ is not reset. Whenever a
call $i_c$ occurs, the updated set $O$ of  upper-bound and lower-bound obligations   is pushed onto the stack  to be recovered at the matching return $i_r$ of $i_c$.
Moreover, if $i_c+1$ is not a return (i.e., $i_r\neq i_c+1$), then $\Au_{y^{\abs}_b}$ moves to a control state where the set of lower-bound obligations is empty and the set  of upper-bound  obligations is
obtained from $O$ by replacing each upper-bound obligation ($f$,$\prec$$u$), for $f\in \{\live,\first\}$, with the live obligation ($\live$,$\prec$$u$).
The latter 
asserted at the initial position
$i_c+1$ of the \MAP\  $\nu$ visiting $i_c+1$ (note that $\nu$ ends at $i_r-1$) is used by $\Au_{y^{\abs}_b}$ to remember that the clock $z_{\prec u}$ cannot be reset along $\nu$. Intuitively, live upper-bound obligations are propagated from the caller \MAP\ to the called \MAP. Note that fresh upper-bound obligations $(\first,\prec$$ u)$ always refer to predictions done along the current \MAP\ and, differently from the live upper-bound obligations, \mbox{they can be removed when the next $b$ occurs along the current \MAP.}

Extra technicalities are needed.
At each position $i$,
$\Au_{y^{\abs}_b}$ guesses whether $i$ is the last position of the current \MAP\ (i.e., the \MAP\ visiting $i$).
For this, it keeps track in its control state of the guessed type (call, return, or internal symbol) of the next input symbol. In particular, when $i$ is a call,
$\Au_{y^{\abs}_b}$ guesses whether 
it has a matching return. If not, $\Au_{y^{\abs}_b}$ pushes onto the stack a special symbol, say $\bad$, and the guess is correct iff
the symbol is never popped from the stack. Conversely, $\Au_{y^{\abs}_b}$ exploits 
a special proposition $p_{\infty}$ whose Boolean value is carried in the control state:
$p_{\infty}$ \emph{does not hold} at a position $j$ of the input iff the \MAP\ visiting $j$ has a caller whose matching return exists. 
Note that $p_{\infty}$ holds at infinitely many positions. The transition function of $\Au_{y^{\abs}_b}$ ensures that the Boolean value of
$p_{\infty}$ is propagated consistently with the guesses. Doing so,  the guesses about the matched calls are correct iff $p_{\infty}$ is asserted infinitely often along a run.
A B\"{u}chi component of $\Au_{y^{\abs}_b}$ ensures this last requirement.
Finally, we have to ensure that the lower-bound obligations and fresh upper-bound obligations at the current position are eventually checked, i.e., the current \MAP\ eventually visits a $b$-position.
For  finite \MAP, this can be 
ensured by the transition function of $\Au_{y^{\abs}_b}$.
For infinite \MAP, we note that at most one
%
%
 infinite  \MAP\ $\nu$ exists along a word, and $\nu$ visits only positions where $p_{\infty}$ holds. Moreover,   each position $i$ greater than the initial position
$i_0$ of $\nu$ is either a $\nu$-position, or a position where $p_{\infty}$ does not hold. Thus, a B\"{u}chi component of $\Au_{y^{\abs}_b}$ using   proposition $p_{\infty}$ ensures
the $b$-liveness requirements along the unique infinite \MAP\ (if any). Full details of the construction of $\Au_{y^{\abs}_b}$ are in Appendix~\ref{APP:RemovalAbstractPredictor}.

By exploiting Theorems~\ref{theorem:closureProperties} and~\ref{theorem:FromECNAtoVPTA}, we establish the main result of the paper.

\begin{theorem} Emptiness,  universality, and language inclusion for B\"{u}chi \ECNA, and visibly model-checking against B\"{u}chi \ECNA\ are \EXPTIME-complete.
\end{theorem}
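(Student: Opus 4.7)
The plan is to reduce all four problems to the emptiness problem for Büchi \VPTA, which is well known to be \EXPTIME-complete, and to keep the blowups under control by chaining Theorems~\ref{theorem:closureProperties} and~\ref{theorem:FromECNAtoVPTA}.

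For \emph{emptiness} of a Büchi \ECNA\ $\Au$, I would apply Theorem~\ref{theorem:FromECNAtoVPTA} to obtain a language-equivalent Büchi \VPTA\ of singly exponential size in singly exponential time, and then run the \EXPTIME\ emptiness algorithm for Büchi \VPTA. For \emph{universality}, I would complement $\Au$ via Theorem~\ref{theorem:closureProperties} -- this yields a Büchi \ECNA\ $\Au^c$ of $2^{O(n^{2})}$ states that uses the same event clocks and the same greatest constant as $\Au$ -- and then test emptiness of $\Au^c$ as above. For \emph{language inclusion} $\TLang(\Au_1)\subseteq\TLang(\Au_2)$ I would complement $\Au_2$, take the product with $\Au_1$ via closure under intersection (Theorem~\ref{theorem:closureProperties}, polynomial in the two factors), and test emptiness of the resulting Büchi \ECNA. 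For \emph{visibly model-checking} $\TLang(\mathcal{S})\subseteq\TLang(\Au)$ the recipe is analogous: complement $\Au$ to $\Au^c$, form the synchronous product of the Büchi \VPTA\ $\mathcal{S}$ with $\Au^c$ as a Büchi \emph{nested} \VPTA\ (its standard clocks inherited from $\mathcal{S}$, its event clocks from $\Au^c$), and test emptiness through Theorem~\ref{theorem:FromECNAtoVPTA}.

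The delicate point, which I expect to be the main obstacle, is to ensure that the two successive exponential blowups -- complementation followed by event-clock removal -- still combine into a \emph{single} exponential blowup in the original input. The complement $\Au^c$ has $2^{O(n^{2})}$ states, but crucially its set of event clocks and its set of constants are inherited from $\Au$, so the number $p$ of distinct event-clock atomic constraints used by $\Au^c$ stays polynomial in the size of $\Au$. Applying Theorem~\ref{theorem:FromECNAtoVPTA} to $\Au^c$ then produces a Büchi \VPTA\ with $2^{O(n^{2})}\cdot 2^{O(p\cdot|\Sigma|)}$ states and $O(p)$ standard clocks, which is of singly exponential size in the input. Since the emptiness test for Büchi \VPTA\ is singly exponential in the state count, in the number of clocks, and in the greatest constant (via the standard region-based procedure), the overall decision procedure runs in \EXPTIME.

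The matching \EXPTIME\ lower bound is inherited: since \ECVPA\ is a syntactic subclass of \ECNA\ (use only global event clocks and no standard clocks), the known \EXPTIME-hardness of emptiness, universality, and language inclusion for Büchi \ECVPA\ transfers directly to Büchi \ECNA. Hardness of visibly model-checking follows from hardness of universality by instantiating $\mathcal{S}$ as a trivial one-state Büchi \VPTA\ accepting every divergent timed word over $\Sigma$; then $\TLang(\mathcal{S})\subseteq\TLang(\Au)$ iff $\Au$ is universal, so universality reduces in logarithmic space to visibly model-checking.
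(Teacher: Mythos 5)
Your overall route is the same as the paper's: reduce universality, inclusion, and visibly model-checking to emptiness via complementation (Theorem~\ref{theorem:closureProperties}) and intersection, remove event clocks with Theorem~\ref{theorem:FromECNAtoVPTA}, and invoke the \EXPTIME\ emptiness procedure for generalized B\"{u}chi \VPTA; the lower bounds are inherited from subsumed classes. You also correctly isolate the delicate point, namely that complementation followed by event-clock removal must stay singly exponential because the number $p$ of event-clock atomic constraints and the constants are inherited from $\Au$.

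However, your final accounting step misstates the fact that actually makes this work. You justify the \EXPTIME\ bound by saying the emptiness test for B\"{u}chi \VPTA\ is ``singly exponential in the state count, in the number of clocks, and in the greatest constant''; taken literally, applying such a test to an automaton that already has $2^{O(n^2)}\cdot 2^{O(p\cdot|\Sigma|)}$ states yields only a doubly exponential procedure. The property the paper relies on (from~\cite{BouajjaniER94}) is sharper: emptiness of generalized B\"{u}chi \VPTA\ is solvable in time $O(n^{4}\cdot 2^{O(m\cdot\log Km)})$, i.e., \emph{polynomial} in the number of states and exponential only in the number of clocks and in the binary encoding of the maximal constant. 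Since after your chain the clock count is $O(p)$ and the constant is unchanged, this refined bound is exactly what collapses the two blow-ups into a single exponential; your argument should cite it explicitly. A smaller caveat concerns hardness of emptiness: you attribute it to a ``known'' \EXPTIME-hardness of emptiness for B\"{u}chi \ECVPA, but that is not what the paper uses (and is not clearly established in~\cite{TangO09}); the paper instead adapts the \EXPTIME-hardness proof for emptiness of B\"{u}chi \VPTA\ to \ECNA, while for universality, inclusion, and (via your correct trivial-system reduction) model-checking, hardness already follows from the untimed B\"{u}chi \VPA\ case, which \ECNA\ subsume.
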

\begin{proof}
For the upper bounds, first observe (\cite{BouajjaniER94}) that the emptiness problem of generalized B\"{u}chi \VPTA\  is \EXPTIME-complete and solvable in time $O(n^{4} \cdot 2^{O(m\cdot \log K m)})$, where
$n$ is the number of states, $m$ is the number of clocks, and $K$ is the largest constant used in the clock constraints of the automaton.
Now, given two  B\"{u}chi \ECNA\ $\Au_1$ and $\Au_2$ over $\Sigma$, checking whether $\TLang(\Au_1)\subseteq \TLang(\Au_2)$ reduces to check emptiness of the language
$\TLang(\Au_1)\cap \overline{\TLang(\Au_2)}$. Similarly,  given a B\"{u}chi \VPTA\ $\mathcal{S}$ where all the states are accepting and a  B\"{u}chi \ECNA\ $\Au$ over the same pushdown alphabet $\Sigma$, model-checking
$\mathcal{S}$ against $\Au$ reduces to check emptiness of the language $\TLang(\mathcal{S})\cap \overline{\TLang(\Au)}$.
Since B\"{u}chi \VPTA\ are polynomial-time closed under intersection 
and universality can be reduced in linear-time to
language inclusion, by the closure properties of B\"{u}chi \ECNA\ (Theorem~\ref{theorem:closureProperties}) and  Theorem~\ref{theorem:FromECNAtoVPTA}, membership in \EXPTIME\ for the considered problems directly follow.

For the matching lower-bounds, the proof of \EXPTIME-hardness for emptiness of B\"{u}chi \VPTA\ can be easily adapted to the class of B\"{u}chi \ECNA. For the other 
problems, the result directly follows
from \EXPTIME-hardness of the corresponding problems for B\"{u}chi \VPA~\cite{AlurMadhu04,AlurMadhu09} which are subsumed by B\"{u}chi \ECNA.\qed
\end{proof}

\paragraph{\bf Conclusions.} In this paper we have introduced and studied \ECNA, a robust subclass of \VPTA\ allowing to express meaningful non-regular timed properties of recursive systems. The closure under Boolean operations, and the decidability of  languages inclusion and visibly model-checking makes \ECNA\ amenable to specification and verification purposes. As future work, we plan to investigate suitable extensions of the Event Clock Temporal Logic introduced for \ECA\ so that a logical counterpart for \ECNA\ can be similarly recovered.

\bibliography{bib2}

\begin{thebibliography}{10}
\providecommand{\url}[1]{\texttt{#1}}
\providecommand{\urlprefix}{URL }

\bibitem{AbdullaAS12}
Abdulla, P.A., Atig, M.F., Stenman, J.: Dense-timed pushdown automata. In:
  Proc. 27th {LICS}. pp. 35--44. {IEEE} Computer Society (2012)

\bibitem{AlurD94}
Alur, R., Dill, D.L.: A theory of timed automata. Theoretical Computer Science
  126(2),  183--235 (1994)

\bibitem{AlurEM04}
Alur, R., Etessami, K., Madhusudan, P.: A temporal logic of nested calls and
  returns. In: Proc. 10th TACAS. LNCS, vol. 2988, pp. 467--481. Springer (2004)

\bibitem{AlurFH99}
Alur, R., Fix, L., Henzinger, T.A.: Event-clock automata: {A} determinizable
  class of timed automata. Theoretical Computer Science  211(1-2),  253--273
  (1999)

\bibitem{AlurMadhu04}
Alur, R., Madhusudan, P.: Visibly pushdown languages. In: Proc. 36th STOC. pp.
  202--211. ACM (2004)

\bibitem{AlurMadhu09}
Alur, R., Madhusudan, P.: Adding nesting structure to words. Journal of {ACM}
  56(3),  16:1--16:43 (2009)

\bibitem{Baier}
Baier, C., Katoen, J.P.: Principles of Model Checking. The MIT Press (2008)

\bibitem{BenerecettiP16}
Benerecetti, M., Peron, A.: Timed recursive state machines: Expressiveness and
  complexity. Theoretical Computer Science  625,  85--124 (2016)

\bibitem{BhaveDKPT16}
Bhave, D., Dave, V., Krishna, S.N., Phawade, R., Trivedi, A.: A logical
  characterization for dense-time visibly pushdown automata. In: Proc. 10th
  {LATA}. LNCS, vol. 9618, pp. 89--101. Springer (2016)

\bibitem{BouajjaniER94}
Bouajjani, A., Echahed, R., Robbana, R.: On the automatic verification of
  systems with continuous variables and unbounded discrete data structures. In:
  Hybrid Systems {II}. LNCS, vol. 999, pp. 64--85. Springer (1994)

\bibitem{BMP10}
Bozzelli, L., Murano, A., Peron, A.: {Pushdown Module Checking.} Formal Methods
  in System Design  36(1),  65--95 (2010)

\bibitem{CMM+03}
Chatterjee, K., Ma, D., Majumdar, R., Zhao, T., Henzinger, T., Palsberg, J.:
  Stack size analysis for interrupt-driven programs. In: Proc. 10th SAS. LNCS,
  vol. 2694, pp. 109--126. Springer (2003)

\bibitem{ClementeL15}
Clemente, L., Lasota, S.: Timed pushdown automata revisited. In: Proc. 30th
  {LICS}. pp. 738--749. {IEEE} Computer Society (2015)

\bibitem{EmmiM06}
Emmi, M., Majumdar, R.: Decision problems for the verification of real-time
  software. In: Proc. 9th {HSCC}. LNCS, vol. 3927, pp. 200--211 (2006)

\bibitem{KPV02}
Kupferman, O., Piterman, N., Vardi, M.Y.: Pushdown specifications. In: Proc.
  9th LPAR. LNCS, vol. 2514, pp. 262--277. Springer (2002)

\bibitem{MP15}
Murano, A., Perelli, G.: Pushdown multi-agent system verification. In: Proc.
  IJCAI. pp. 1090--1097 (2015)

\bibitem{TangO09}
Tang, N.V., Ogawa, M.: Event-clock visibly pushdown automata. In: Proc. 35th
  {SOFSEM}. LNCS, vol. 5404, pp. 558--569. Springer (2009)

\bibitem{TrivediW10}
Trivedi, A., Wojtczak, D.: Recursive timed automata. In: Proc. 8th {ATVA}.
  LNCS, vol. 6252, pp. 306--324. Springer (2010)

\bibitem{Wal96}
Walukiewicz, I.: {Pushdown Processes: Games and Model Checking.} In: CAV'96.
  pp. 62--74 (1996)

\end{thebibliography}
\bibliographystyle{splncs03}

\newpage

\appendix

\newcounter{aux}

\begin{center}
\begin{LARGE}
  \noindent\textbf{Appendix}
\end{LARGE}
\end{center}

\section{Proof of Theorem~\ref{theorem:closureProperties}}\label{APP:closureProperties}

In this section, we provide a proof of the following result.

\setcounter{aux}{\value{theorem}}
\setcounter{theorem}{\value{theo-closureProperties}}

\begin{theorem}[Closure properties] The class of $\omega$-timed languages accepted by B\"{u}chi
\ECNA\ is closed under union, intersection, and complementation. In particular, given two  B\"{u}chi
\ECNA\ $\Au=\tpl{\Sigma, Q,Q_{0},C,\Gamma\cup\{\top\},\Delta,F}$  and $\Au'=\tpl{\Sigma, Q',Q'_{0},C',\Gamma'\cup\{\top\},\Delta',F'}$ over $\Sigma$, one can construct
\begin{compactitem}
  \item a B\"{u}chi \ECNA\ accepting $\TLang(\Au)\cup \TLang(\Au')$ with $|Q|+|Q'|$ states, $|\Gamma|+|\Gamma'|+1$ stacks symbols, and greatest constant $\max(K_{\Au},K_{\Au'})$;
    \item a B\"{u}chi \ECNA\ accepting $\TLang(\Au)\cap \TLang(\Au')$ with $2|Q||Q'|$ states, $|\Gamma|
    |\Gamma'|$ stacks symbols,  and greatest constant $\max(K_{\Au},K_{\Au'})$;
      \item a B\"{u}chi \ECNA\ accepting the complement of $\TLang(\Au)$ with $2^{O(n^{2})}$ states,  $O(2^{O(n^{2})}\cdot |\Scall|\cdot |\Const|^{O(|\Sigma|)})$ stack symbols, and greatest constant
      $K_\Au$, where $n=|Q|$ and $\Const$ is the set of constants used in the clock constraints of $\Au$.
\end{compactitem}
\end{theorem}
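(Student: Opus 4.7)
For union and intersection I use the standard constructions. The \emph{union} is the disjoint union of $\Au$ and $\Au'$ sharing the stack bottom $\top$: state set $Q \cup Q'$, stack alphabet $\Gamma \cup \Gamma' \cup \{\top\}$, initial states $Q_0 \cup Q_0'$, accepting set $F \cup F'$, and $\Delta \cup \Delta'$. For \emph{intersection} I take the generalized B\"uchi product with a flag bit (states $Q \times Q' \times \{1,2\}$, stack alphabet $\Gamma \times \Gamma'$, synchronized transitions on the same input with conjoined clock constraints, flag flipping on $F$ resp.\ $F'$). Correctness of both constructions is standard; in particular, because event-clock valuations are determined by the input word alone, the two component constraints share a single valuation and their conjunction is well-defined.

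The hard part is \emph{complementation}. I adapt the region-based abstraction from~\cite{TangO09} used for \ECVPA, observing that abstract and caller clock valuations are still a deterministic function of the input timed word. Let $\Const$ be the set of constants occurring in $\Au$. For each event clock $z \in C$, the intervals determined by $\Const$ together with the value $\NULL$ form a finite set of size $O(|\Const|)$; taking the product over $C$ yields a region set $R$ of size $|\Const|^{O(|\Sigma|)}$, and for every infinite timed word $w$ and position $i$ the valuation $\val^{w}_i$ lies in a unique region $\rho(w,i) \in R$. I then view $\Au$ as an untimed B\"uchi \VPA\ $\tilde{\Au}$ over the enriched pushdown alphabet $\Sigma \times R$, replacing each transition reading $a$ with clock constraint $\theta$ by copies reading $(a,r)$ for each $r \in R$ satisfying $\theta$. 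A timed word $w$ lies in $\TLang(\Au)$ iff its uniquely determined tagged lift $(a_0,\rho(w,0))(a_1,\rho(w,1))\cdots$ is accepted by $\tilde{\Au}$.

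Next I apply the Alur--Madhusudan B\"uchi \VPA\ complementation~\cite{AlurMadhu09} to $\tilde{\Au}$, producing a B\"uchi \VPA\ $\tilde{\mathcal{B}}$ over $\Sigma \times R$ with $2^{O(n^2)}$ states and stack alphabet of size $O(|\Scall|\cdot |R|\cdot 2^{O(n^2)})$ (the call alphabet of $\tilde{\Au}$ is $\Scall \times R$). I then convert $\tilde{\mathcal{B}}$ back to a B\"uchi \ECNA\ $\mathcal{B}$ over $\Sigma$ by replacing each transition labelled $(a,r)$ by a transition labelled $a$ guarded by the characteristic clock constraint of region $r$ over $C$. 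Since $\rho(w,\cdot)$ is a deterministic function of $w$, the run of $\mathcal{B}$ on $w$ is forced to follow the tagged lift of $w$, so $\TLang(\mathcal{B})$ is the complement of $\TLang(\Au)$. The size bounds $2^{O(n^2)}$ states and $O(2^{O(n^2)}\cdot |\Scall|\cdot |\Const|^{O(|\Sigma|)})$ stack symbols follow from these transformations, and the greatest constant of $\mathcal{B}$ is unchanged since no new constants are introduced.

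The main obstacle is ensuring that the region-abstraction pipeline is faithful over timed words: one must rule out both spurious enriched words whose region tags do not correspond to any timed word, and any ambiguity in the mapping from a timed word to its tag sequence. Both are secured by the fact that $w \mapsto \rho(w,\cdot)$ is a genuine function, which crucially still holds for the abstract and caller clocks introduced in \ECNA\ because their values at position $i$ depend only on the nested structure of the untimed word together with its timestamps. Once this determinacy is in place, the complementation reduces to invoking the standard Alur--Madhusudan construction on the alphabet $\Sigma \times R$ and compiling the region tags back into event-clock guards.
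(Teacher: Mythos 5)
Your proposal is correct and follows essentially the same route as the paper: standard disjoint-union and flagged-product constructions for union and intersection, and for complementation the same region-based untimed/timed homomorphism pipeline (lift $\Au$ to a B\"uchi \VPA\ over $\Sigma\times R$, complement via Alur--Madhusudan, compile region tags back into event-clock guards), justified exactly as in the paper by the fact that the valuation map $w\mapsto \val^{w}_{(\cdot)}$ is deterministic even for abstract and caller clocks.
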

\setcounter{theorem}{\value{aux}}

Let $\Au=\tpl{\Sigma, Q,Q_{0},C,\Gamma\cup\{\top\},\Delta,F}$  and $\Au'=\tpl{\Sigma, Q',Q'_{0},C',\Gamma'\cup\{\top\},\Delta',F'}$ be
two B\"{u}chi
\ECNA\ over $\Sigma$. Closure under union and intersection easily follows from the  language closure properties of B\"{u}chi \ECA\ and B\"{u}chi visibly pushdown automata (\VPA). In particular, the B\"{u}chi \ECNA\ accepting $\TLang(\Au)\cup \TLang(\Au')$ is obtained   by taking the
union of the states, stack symbols,  and transitions of $\Au$ and $\Au'$ (assuming they are disjoint) and taking the new
set of initial states (resp., final states) to be the union of the initial states (resp., final states) of $\Au$ and $\Au'$.
The B\"{u}chi \ECNA\ accepting $\TLang(\Au)\cap \TLang(\Au')$ has set of states $Q\times Q'\times \{0,1\}$, set of initial states
$Q_0\times Q'_0\times \{0\}$, stack alphabet $(\Gamma\times \Gamma')\cup\{\top\}$, and set of accepting states $Q\times F'\times \{1\}$.
 When reading a call, if $\Au$ pushes
$\gamma$ and $\Au'$ pushes
$\gamma'$, then the product automaton pushes $(\gamma,\gamma')$.
We exploit the fact
that $\Au$ and $\Au'$, being $\ECNA$ over the same pushdown alphabet, synchronize on the push and pop operations on the stack. The additional flag in the set of states is used to ensure  that the final states of both
automata are visited infinitely often.

It remains to prove the closure under language complementation. For this, we adopt the approach exploited in~\cite{TangO09} for the subclass of B\"{u}chi $\ECVPA$. In particular, we define an homomorphism from B\"{u}chi \ECNA\ to B\"{u}chi \VPA\ and vice versa. Note that a B\"{u}chi  \VPA\ is defined as a B\"{u}chi  \ECNA\ but we omit the set of event clocks, and the set of clock constraints from the transition function.
The notion of (accepting) run of a B\"{u}chi  \VPA\
 over an infinite word on $\Sigma$ is similar to the notion of (accepting) run of an \ECNA\ over an infinite timed word on $\Sigma$, but we omit the requirements about the clock constraints.

Fix a B\"{u}chi \ECNA\ $\Au=\tpl{\Sigma, Q,Q_{0},C,\Gamma\cup\{\top\},\Delta,F}$ and let $\Const =
\{c_0,\ldots,c_k\}$ be the set of  constants used in the clock constraints of $\Au$ ordered for increasing values, i.e. such that
$0\leq c_0<c_1\ldots <c_k$. We consider the following set $\Intv$ of intervals over $\RealP\cup \{\NULL\}$:
\[
\Intv := \{[\NULL,\NULL],[0,0],(0,c_0)\}\cup \bigcup_{i=0}^{i=k-1}\{[c_i,c_i],(c_i,c_{i+1})\}\cup \{[c_k,c_k],(c_k,\infty)\}
\]
A \emph{region $\reg$ of $\Au$} is a mapping $\reg: C \mapsto \Intv$ assigning to each event clock in $C\subseteq C_\Sigma$ an interval in $\Intv$.
The mapping $\reg$ induces the clock constraint $\bigwedge_{z\in C} z\in \reg(z)$. We denote by $[\reg]$ the set of valuations over $C$
satisfying the clock constraint associated with $\reg$, and by  $\Reg$ the set of regions of $\Au$. For a clock constraint $\theta$ over $C$, let
$[\theta]$ be the set of valuations over $C$ satisfying $\theta$.

\begin{remark}\label{remark:regions} By construction, the following holds.
\begin{compactitem}
  \item The set $\Reg$ of regions represents a partition of the set of clock valuations over $C$, i.e.: (i) for all valuations $\val$ over $C$, there is a region $\reg\in\Reg$
  such that $\val\in \Reg$, and (ii)
for all regions $\reg,\reg'\in\Reg$, $\reg\neq \reg' \Rightarrow [\reg]\cap [\reg']=\emptyset$.
\item for each clock constraint $\theta$ of $\Au$ and region $\reg\in\Reg$, either $[\reg]\subseteq [\theta]$ or
$[\reg]\cap [\theta]=\emptyset$.
\end{compactitem}
\end{remark}

We associate with $\Sigma= \Scall\cup\Sret\cup\Sint$ and the set of regions $\Reg$ a pushdown alphabet $\Lambda = \Sigma\times \Reg$, called \emph{interval pushdown alphabet},
whose set of calls is $\Scall\times \Reg$, whose set of returns is $\Sret\times \Reg$, and whose set of internal actions is $\Sint\times \Reg$. Elements
of $\Lambda$ are pairs of the form $(a,\reg)$, where $a\in\Sigma$ and $\reg$ is a region of $\Au$  which is meant to represent the associated constraint $\bigwedge_{z\in C} z\in \reg(z)$.
An infinite word $\lambda = (a_0,\reg_0),(a_1,\reg_1),\ldots$ over $\Lambda$ induces in a natural way a set of infinite timed words over $\Lambda$, denoted $tw(\lambda)$, defined as follows:
$w=(\sigma,\tau)\in\tw(\lambda)$ iff $\sigma= a_0 a_1\ldots$ and for all $i\geq 0$, $\val_{i}^{w}\in [\reg_i]$. We extend the mapping $tw$ to $\omega$-languages $\Lang$ over $\Lambda$ in the obvious way: $\tw(\Lang):= \bigcup_{\lambda\in\Lang}\tw(\lambda)$. By means of the mapping $\tw$, infinite words over $\Lambda$ define a partition of the set of infinite timed words
over $\Sigma$.

\begin{lemma}\label{lemma:symbolicTimedWords} The following holds.
\begin{compactenum}
  \item For each infinite timed word $w= (\sigma,\tau) $ over $\Sigma$, there is an infinite word $\lambda$ over $\Lambda$
     of the form $(\sigma_0,\reg_0)(\sigma_1,\reg_1)$ such that $w\in \tw(\lambda)$.
   \item For all infinite words $\lambda$ and $\lambda'$ over $\Lambda$, $\lambda \neq \lambda' \Rightarrow \tw(\lambda)\cap \tw(\lambda')=\emptyset$.
\end{compactenum}
\end{lemma}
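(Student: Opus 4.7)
The plan is to derive both parts of the lemma directly from the partition property of regions recorded in Remark~\ref{remark:regions}, namely that $\{[\reg] \mid \reg\in\Reg\}$ is a partition of the set of clock valuations over $C$: every valuation belongs to some $[\reg]$ and distinct regions have disjoint satisfaction sets. Once this is in hand, no further clock-theoretic work is needed; the rest is unwinding the definition of $\tw(\cdot)$.

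For part (1), given $w=(\sigma,\tau)$, I would construct $\lambda$ position by position. At each position $i\geq 0$, the clock valuation $\val_i^w$ is a well-defined element of $\RealP\cup\{\NULL\}$ on each event clock in $C$, so by the first item of Remark~\ref{remark:regions} there exists a (unique) region $\reg_i\in\Reg$ with $\val_i^w\in[\reg_i]$. Setting $\lambda_i := (\sigma_i,\reg_i)$ gives an infinite word over $\Lambda$, and the condition $\val_i^w\in[\reg_i]$ holding for every $i$ is exactly the definition of $w\in\tw(\lambda)$.

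For part (2), I would argue by contradiction. Suppose $\lambda\neq\lambda'$ yet some $w=(\sigma,\tau)$ lies in $\tw(\lambda)\cap\tw(\lambda')$. Let $i$ be the least position at which $\lambda$ and $\lambda'$ disagree, and write $\lambda_i=(a_i,\reg_i)$ and $\lambda'_i=(a'_i,\reg'_i)$. From $w\in\tw(\lambda)$ we read off $\sigma_i=a_i$, and from $w\in\tw(\lambda')$ we read off $\sigma_i=a'_i$; hence $a_i=a'_i$ and the disagreement must be in the region component, i.e.\ $\reg_i\neq\reg'_i$. But $w\in\tw(\lambda)$ and $w\in\tw(\lambda')$ also force $\val_i^w\in[\reg_i]$ and $\val_i^w\in[\reg'_i]$, so $[\reg_i]\cap[\reg'_i]\neq\emptyset$, contradicting the partition property.

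There is no real obstacle: both arguments are essentially bookkeeping once one has the partition property of $\Reg$, which has already been stated. The only mild point requiring care is the observation used in part (1) that $\val_i^w$ is a genuine clock valuation (possibly taking the value $\NULL$), so that the partition of \emph{all} valuations over $C$ in the remark—not just those with finite values—is what is actually needed; this is why $\Intv$ includes the singleton $[\NULL,\NULL]$.
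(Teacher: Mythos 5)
Your proof is correct and follows essentially the same route as the paper: part (1) picks, for each position $i$, the region $\reg_i$ containing $\val_i^w$ via the partition property of Remark~\ref{remark:regions}, and part (2) derives a contradiction from a position where the regions differ, using disjointness of distinct regions. The only cosmetic difference is that you single out the least disagreement position, while the paper just observes that the letter components must coincide and takes any position with $\reg_n\neq\reg'_n$; the substance is identical.
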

\begin{proof} For Property~1, let $w= (\sigma,\tau) $ be an infinite timed word over $\Lambda$. By Remark~\ref{remark:regions}, for all $i\geq 0$, there is a region $\reg_i\in \Reg$
such that $\val_i^{w}\in [\reg_i]$. Let $\lambda = (\sigma_0,\reg_0)(\sigma_1,\reg_1)\ldots$. We have that
$w\in \tw(\lambda)$, and the result follows.

For Property~2, let $\lambda$ and $\lambda'$ be two distinct infinite words over $\lambda$. Let us assume that
$\tw(\lambda)\cap \tw(\lambda')\neq\emptyset$ and derive a contradiction. Hence, by construction,
$\lambda =(a_0,\reg_0)(a_1,\reg_1)\ldots$, $\lambda'=(a_0,\reg'_0)(a_1,\reg'_1)\ldots$, and there is an infinite timed word $w$ over $\Lambda$
of the form $(a_0,\tau_0)(a_1,\tau_1)$ such that $\val_i^{w}\in [\reg_i]\cap [\reg'_i]$ for all $i\geq 0$.
Since $\lambda\neq \lambda'$, there exists $n\geq 0$ such that $\reg_n\neq \reg'_n$. By Remark~\ref{remark:regions}, $[\reg_n]\cap [\reg'_n]=\emptyset$ which is a contradiction since
$\val_n^{w}\in [\reg_n]\cap [\reg'_n]$, and the result follows. \qed
\end{proof}

The following two propositions, establish an untimed homomorphism from B\"{u}chi \ECNA\ to B\"{u}chi \VPA, and a timed homomorphism from
  B\"{u}chi \VPA\ to B\"{u}chi \ECNA, respectively.

\begin{proposition}[Untimed homomorphism]\label{prop:UntimedHom} Let $\Au=\tpl{\Sigma, Q,Q_{0},C,\Gamma\cup\{\top\},\Delta,F}$ be a B\"{u}chi \ECNA, and $\Lambda$ be the interval pushdown alphabet induced by $\Au$. Then, one can construct  a B\"{u}chi \VPA\ $\Untimed(\Au)$ over $\Lambda$ of the form  $\tpl{\Lambda, Q,Q_{0},\Gamma\cup\{\top\},\Delta',F}$
such that $\tw(\Lang(\Untimed(\Au)))=\TLang(\Au)$.
\end{proposition}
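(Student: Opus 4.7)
The plan is to define $\Untimed(\Au)$ by keeping the state space, initial states, stack alphabet, and accepting states of $\Au$ unchanged, and by replacing each symbolic transition of $\Au$ (tagged with a clock constraint $\theta$) by a family of untimed transitions over $\Lambda = \Sigma \times \Reg$, one for each region $\reg$ that is compatible with $\theta$. More precisely, for each push transition $(q,c,\theta,q',\gamma) \in \Delta_c$, I would put $(q,(c,\reg),q',\gamma)$ into $\Delta'_c$ for every $\reg \in \Reg$ with $[\reg]\subseteq[\theta]$, and analogously for pop and internal transitions. By the second item of Remark~\ref{remark:regions}, ``compatibility'' may be tested as $[\reg]\cap[\theta]\neq \emptyset$, which is trivially equivalent to $[\reg]\subseteq[\theta]$; this makes the construction effective and causes at most a $|\Reg|$-factor blow-up in the number of transitions while preserving $Q$, $Q_0$, $\Gamma$, and $F$.

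The correctness argument then splits into the two inclusions. For $\tw(\Lang(\Untimed(\Au)))\subseteq \TLang(\Au)$, I would take $\lambda = (a_0,\reg_0)(a_1,\reg_1)\cdots \in \Lang(\Untimed(\Au))$, fix an accepting run of $\Untimed(\Au)$ on $\lambda$, and pick any $w=(\sigma,\tau)\in\tw(\lambda)$; then $\sigma=a_0 a_1\cdots$ and $\val_i^w\in[\reg_i]$ for every $i$. Transplanting the run to $\Au$ on $w$ uses, at step $i$, a $\Au$-transition with constraint $\theta_i$ satisfying $[\reg_i]\subseteq[\theta_i]$ (by construction), hence $\val_i^w\models \theta_i$, so the run is legal in $\Au$; the accepting condition is the same, so $w\in \TLang(\Au)$.

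For the converse $\TLang(\Au)\subseteq \tw(\Lang(\Untimed(\Au)))$, I would take $w=(\sigma,\tau)\in\TLang(\Au)$ together with an accepting run of $\Au$. For each $i$, by the first item of Remark~\ref{remark:regions} there is a unique region $\reg_i\in \Reg$ with $\val_i^w\in [\reg_i]$; define $\lambda=(\sigma_0,\reg_0)(\sigma_1,\reg_1)\cdots$. At step $i$, the $\Au$-transition used has some constraint $\theta_i$ with $\val_i^w\models\theta_i$, so $[\reg_i]\cap[\theta_i]\neq\emptyset$; the second item of Remark~\ref{remark:regions} then forces $[\reg_i]\subseteq[\theta_i]$, so the matching untimed transition is present in $\Delta'$. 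Thus the same state/stack sequence is an accepting run of $\Untimed(\Au)$ on $\lambda$, giving $\lambda\in\Lang(\Untimed(\Au))$ and $w\in \tw(\lambda)$.

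There is essentially no hard step: the only non-trivial point is the appeal to Remark~\ref{remark:regions} to turn the finite set of clock constraints appearing in $\Delta$ into a finite partition of the space of event-clock valuations, and the observation that each constraint of $\Au$ is a union of regions. This is exactly what makes the homomorphism clean and guarantees that no timed word in $\TLang(\Au)$ is lost and no spurious timed word is added when passing to the untimed automaton.
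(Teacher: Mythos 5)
Your construction is exactly the paper's: keep $Q$, $Q_0$, $\Gamma\cup\{\top\}$, $F$ and replace each transition carrying constraint $\theta$ by the untimed transitions over $(a,\reg)$ for every region $\reg$ with $[\reg]\subseteq[\theta]$, with correctness resting on Remark~\ref{remark:regions} and Lemma~\ref{lemma:symbolicTimedWords}(1); your two-inclusion argument just spells out what the paper leaves as "easily derived". The proposal is correct and takes essentially the same approach.
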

\begin{proof} The transition function $\Delta'$ of $\Untimed(\Au)$ is defined as follows:
\begin{itemize}
\item Push: If $(q,c,\theta,q',\gamma)$ is a push transition in $\Delta$, then for each region
$\reg$ of $\Au$ such that $[\reg]\subseteq [\theta]$, $(q,(c,\reg),q',\gamma)\in\Delta'$.
\item Pop: If $(q,r,\theta,\gamma,q')$ is a pop transition in $\Delta$, then for each region
$\reg$ of $\Au$ such that $[\reg]\subseteq [\theta]$, $(q,(r,\reg),\gamma,q')\in\Delta'$.%
\item Internal:  If $(q,a,\theta,q')$ is an internal transition in $\Delta$, then for each region
$\reg$ of $\Au$ such that $[\reg]\subseteq [\theta]$, $(q,(a,\reg),q')\in\Delta'$.
\end{itemize}
By Remark~\ref{remark:regions} and Lemma~\ref{lemma:symbolicTimedWords}(1), we easily derive the correctness of the construction.\qed
\end{proof}

\begin{proposition}[Timed homomorphism]\label{prop:TimedHom} Let $\Au=\tpl{\Lambda, Q,Q_{0},\Gamma\cup\{\top\},\Delta,F}$ be a B\"{u}chi \VPA\ over an interval pushdown alphabet associated with $\Sigma$ and a set $C\subseteq C_\Sigma$ of event clocks. Then, one can construct  a B\"{u}chi \ECNA\ $\Timed(\Au)$ over $\Sigma$ of the form  $\tpl{\Sigma, Q,Q_{0},C,\Gamma\cup\{\top\},\Delta',F}$
such that $\TLang(\Timed(\Au))=\tw(\Lang(\Au))$.
\end{proposition}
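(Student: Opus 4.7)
My plan is to mirror the construction of Proposition~\ref{prop:UntimedHom} in the reverse direction: replace each label $(a,\reg) \in \Lambda$ on a transition of $\Au$ by the pair $(a, \theta_\reg)$, where $\theta_\reg$ is the clock constraint $\bigwedge_{z \in C} z \in \reg(z)$ canonically induced by the region $\reg$. Concretely, I define $\Delta'$ by the following three clauses: if $(q, (c,\reg), q', \gamma) \in \Delta$ with $c \in \Scall$, put $(q, c, \theta_\reg, q', \gamma) \in \Delta'$; if $(q, (r,\reg), \gamma, q') \in \Delta$ with $r \in \Sret$, put $(q, r, \theta_\reg, \gamma, q') \in \Delta'$; and if $(q, (a,\reg), q') \in \Delta$ with $a \in \Sint$, put $(q, a, \theta_\reg, q') \in \Delta'$. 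The states, initial states, stack alphabet and final states are inherited unchanged, and the set of event clocks is $C$.

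The key observation underpinning correctness is the definitional equivalence $\val \models \theta_\reg \iff \val \in [\reg]$, which follows immediately from how $\theta_\reg$ was defined. I would then prove $\TLang(\Timed(\Au)) = \tw(\Lang(\Au))$ by two symmetric inclusions. For the inclusion $\tw(\Lang(\Au)) \subseteq \TLang(\Timed(\Au))$: fix $\lambda = (a_0, \reg_0)(a_1, \reg_1)\ldots \in \Lang(\Au)$, some $w = (\sigma, \tau) \in \tw(\lambda)$, and an accepting run $(q_0, \beta_0), (q_1, \beta_1), \ldots$ of $\Au$ on $\lambda$. I use the same sequence of control states and stack contents, together with, at each step $i$, the transition of $\Delta'$ obtained from the $\Au$-transition taken at step $i$. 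The only nontrivial run condition to verify is the satisfaction of the clock constraint $\theta_{\reg_i}$ by $\val^w_i$; but $w \in \tw(\lambda)$ means exactly $\val^w_i \in [\reg_i]$, which is equivalent to $\val^w_i \models \theta_{\reg_i}$. The acceptance condition is preserved verbatim since the control states coincide.

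For the converse inclusion $\TLang(\Timed(\Au)) \subseteq \tw(\Lang(\Au))$: given $w = (\sigma,\tau) \in \TLang(\Timed(\Au))$ and an accepting run of $\Timed(\Au)$ on $w$, each step $i$ uses some transition of $\Delta'$ that, by construction, arose from a uniquely determined $\Au$-transition labeled by $(\sigma_i, \reg_i)$ for some region $\reg_i$. Defining $\lambda := (\sigma_0,\reg_0)(\sigma_1,\reg_1)\ldots$, the corresponding sequence of states and stack contents forms an accepting run of $\Au$ on $\lambda$, so $\lambda \in \Lang(\Au)$. Moreover, the clock constraint $\theta_{\reg_i}$ is satisfied by $\val^w_i$ at every step, hence $\val^w_i \in [\reg_i]$ and therefore $w \in \tw(\lambda) \subseteq \tw(\Lang(\Au))$.

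I do not anticipate any real obstacle: the entire argument reduces to the bijective correspondence between regions $\reg$ and the clock constraints $\theta_\reg$ they induce, together with the fact that both the VPA run condition and the ECNA run condition agree on every component except for the handling of the labels and the clock constraints, which are dual to each other by construction. The only small care needed is to notice that the construction of $\Delta'$ is oblivious to whether multiple regions $\reg$ appear on transitions between the same pair of control states with the same stack effect: they simply yield multiple transitions of $\Timed(\Au)$ with pairwise disjoint constraints (by Remark~\ref{remark:regions}), and this disjointness is not even needed for the proof, only for determinism considerations.
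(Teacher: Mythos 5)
Your construction of $\Delta'$ is exactly the one the paper uses, and your correctness argument (lifting runs in both directions via the equivalence $\val\models\theta_\reg \iff \val\in[\reg]$) is just a spelled-out version of the correctness the paper derives from Remark~\ref{remark:regions} and Lemma~\ref{lemma:symbolicTimedWords}. So the proposal is correct and takes essentially the same approach as the paper, only with more detail.
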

\begin{proof} The transition function $\Delta'$ of $\Timed(\Au)$ is defined as follows:
\begin{itemize}
\item Push: If $(q,(c,\reg),q',\gamma)$ is a push transition in $\Delta$, then $(q,c,\theta,q',\gamma)\in\Delta'$
where $\theta := \bigwedge_{z\in C}z\in \reg(z)$.
\item Pop: If $(q,(r,\reg),\gamma,q')$ is a pop transition in $\Delta$, then $(q,r,\theta,\gamma,q')\in\Delta'$
where $\theta := \bigwedge_{z\in C}z\in \reg(z)$.%
\item Internal:  If $(q,(r,\reg),q')$ is an internal transition in $\Delta$, then $(q,r,\theta,q')\in\Delta'$
where $\theta := \bigwedge_{z\in C}z\in \reg(z)$.
\end{itemize}
By Remark~\ref{remark:regions} and Lemma~\ref{lemma:symbolicTimedWords}(1), we easily derive the correctness of the construction.\qed
\end{proof}

By Lemma~\ref{lemma:symbolicTimedWords}, Propositions~\ref{prop:UntimedHom} and~\ref{prop:TimedHom}, and the known  closure properties of B\"{u}chi Visibly Pushdown Automata (\VPA)~\cite{AlurMadhu04,AlurMadhu09}, we have
the following result.

\begin{theorem}[Closure under complementation]\label{theor:Complementation} Given a B\"{u}chi \ECNA\ $\Au$ over $\Sigma$ with $n$ states and set of constants $\Const$, one can construct in singly exponential time
a B\"{u}chi \ECNA\ $\Au$ over $\Sigma$ accepting the complement of $\TLang(\Au)$ having $2^{O(n^{2})}$ states and $O(2^{O(n^{2})}\cdot |\Scall|\cdot |\Const|^{O(|\Sigma|)})$ stack symbols.
\end{theorem}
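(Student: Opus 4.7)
The plan is to reduce complementation of Büchi \ECNA\ to the known complementation procedure for Büchi \VPA\ by going through the interval pushdown alphabet $\Lambda$. Concretely, given a Büchi \ECNA\ $\Au$ over $\Sigma$, first I would apply Proposition~\ref{prop:UntimedHom} to obtain a Büchi \VPA\ $\Untimed(\Au)$ over $\Lambda$ with $\tw(\Lang(\Untimed(\Au))) = \TLang(\Au)$ and the same state set $Q$ and stack alphabet $\Gamma\cup\{\top\}$. Second, I would invoke the known closure of Büchi \VPL\ under complementation (from~\cite{AlurMadhu04,AlurMadhu09}) to obtain a Büchi \VPA\ $\mathcal{B}$ over $\Lambda$ recognizing $\Lambda^{\omega}\setminus \Lang(\Untimed(\Au))$. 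Finally, I would apply Proposition~\ref{prop:TimedHom} to $\mathcal{B}$ to produce a Büchi \ECNA\ $\Timed(\mathcal{B})$ over $\Sigma$; this is the candidate complement automaton.

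For correctness, the key point is that $\tw$ behaves well with respect to complementation thanks to Lemma~\ref{lemma:symbolicTimedWords}. Namely, by part~(1), every infinite timed word over $\Sigma$ lies in $\tw(\lambda)$ for some $\lambda \in \Lambda^{\omega}$, and by part~(2) the sets $\tw(\lambda)$ are pairwise disjoint, so the family $\{\tw(\lambda) : \lambda \in \Lambda^{\omega}\}$ is a partition of the set of infinite timed words over $\Sigma$. Hence for any $\omega$-language $L \subseteq \Lambda^{\omega}$, one has
\[
\tw(\Lambda^{\omega}\setminus L) \;=\; \bigl\{\text{infinite timed words over } \Sigma\bigr\}\setminus \tw(L).
\]
Applying this to $L = \Lang(\Untimed(\Au))$ and using Proposition~\ref{prop:TimedHom}, I get $\TLang(\Timed(\mathcal{B})) = \tw(\Lang(\mathcal{B})) = \tw(\Lambda^{\omega}\setminus \Lang(\Untimed(\Au)))$, which is exactly the complement of $\TLang(\Au)$ within the universe of infinite timed words over $\Sigma$, as required.

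For the complexity bounds, I would tally sizes along the three steps. The set of regions has size $|\Reg| = |\Intv|^{|C|}$ with $|\Intv| = O(|\Const|)$ and $|C| \le 5|\Sigma|$, so $|\Reg| = |\Const|^{O(|\Sigma|)}$; hence $|\Lambda| = O(|\Sigma|\cdot |\Const|^{O(|\Sigma|)})$ and $\Untimed(\Au)$ has $|Q|=n$ states over this alphabet. The standard complementation of Büchi \VPA~\cite{AlurMadhu04,AlurMadhu09} then yields $\mathcal{B}$ with $2^{O(n^{2})}$ states and $O(2^{O(n^{2})}\cdot |\Scall\times\Reg|) = O(2^{O(n^{2})}\cdot |\Scall|\cdot |\Const|^{O(|\Sigma|)})$ stack symbols, running in singly exponential time. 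Finally, $\Timed(\mathcal{B})$ preserves both the state set and the stack alphabet (Proposition~\ref{prop:TimedHom} just replaces each symbol $(a,\reg)$ on a transition by $a$ together with the clock constraint $\bigwedge_{z\in C} z\in \reg(z)$), giving exactly the announced bounds. The greatest constant of $\Timed(\mathcal{B})$ is inherited from $\Reg$, which uses only constants in $\Const$, so it equals $K_{\Au}$.

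The main subtlety I would expect is making the partition-based correctness argument rigorous: one must ensure that $\Untimed(\Au)$, as constructed in Proposition~\ref{prop:UntimedHom}, accepts \emph{all} words $\lambda\in\Lambda^\omega$ whose associated timed words lie in $\TLang(\Au)$, and none outside, so that complementing over $\Lambda^{\omega}$ corresponds cleanly to complementing over timed words via $\tw$. This relies crucially on Remark~\ref{remark:regions}: since any transition guard $\theta$ in $\Au$ is a union of whole regions, the choice of region $\reg_i$ at each step is unambiguous given the valuations $\val_i^{w}$, and no spurious $\lambda\notin\Lang(\Untimed(\Au))$ with $w\in\tw(\lambda)$ admits an accepting $\Au$-run on $w$. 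Everything else is bookkeeping.
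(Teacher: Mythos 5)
Your proposal is correct and follows essentially the same route as the paper's proof: untime via Proposition~\ref{prop:UntimedHom} to a B\"{u}chi \VPA\ over the interval alphabet $\Lambda$, complement it using the known \VPA\ construction of~\cite{AlurMadhu04,AlurMadhu09}, retime via Proposition~\ref{prop:TimedHom}, and use Lemma~\ref{lemma:symbolicTimedWords} (the partition property of $\tw$) for correctness, with the same size accounting via $|\Lambda_c| = O(|\Scall|\cdot|\Const|^{O(|\Sigma|)})$.
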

\begin{proof}
Let $\Au$ be a B\"{u}chi \ECNA\ over $\Sigma$ with $n$ states and set of integer constants $\Const$, $\Lambda$ be the interval pushdown alphabet induced by $\Au$, and $\Lambda_c$ be the set of calls in $\Lambda$.
By Proposition~\ref{prop:UntimedHom}, we can construct a B\"{u}chi \VPA\ $\Untimed(\Au)$ over $\Lambda$ with $n$ states such that $\tw(\Lang(\Untimed(\Au)))= \TLang(\Au)$. By~\cite{AlurMadhu04,AlurMadhu09}, starting from
the B\"{u}chi \VPA\ $\Untimed(\Au)$, one can construct in singly exponential time
a B\"{u}chi \VPA\ $\overline{\Untimed(\Au)}$ over $\Lambda$ accepting $\Lambda^{\omega}\setminus \Lang(\Untimed(\Au))$ with $2^{O(n^{2})}$ states and $O(2^{O(n^{2})}\cdot |\Lambda_c|)$ stack symbols. Applying Proposition~\ref{prop:TimedHom} to the B\"{u}chi \VPA\ $\overline{\Untimed(\Au)}$, one can construct in linear time a B\"{u}chi \ECNA\ $\overline{\Au}$ over $\Sigma$ with $2^{O(n^{2})}$ states and $O(2^{O(n^{2})}\cdot |\Lambda_c|)$ stack symbols such that
$\TLang(\overline{\Au}) = \tw(\Lambda^{\omega}\setminus \Lang(\Untimed(\Au)))$. Since $\TLang(\Au)=\tw(\Lang(\Untimed(\Au)))$, by Lemma~\ref{lemma:symbolicTimedWords}, $\overline{\Au}$ accepts all and only the infinite timed words over $\Sigma$ which are not in $\TLang(\Au)$. Thus, since $|\Lambda_c|=O(|\Scall|\cdot |\Const|^{O(|\Sigma|)}) $, the result follows.\qed

\end{proof}

\section{Inexpressiveness results: proof of Proposition~\ref{prop:expressivenessPredRec}}\label{APP:expressivenessPredRec}

Let $\Sigma= \Scall\cup \Sret\cup \Sint$ be the pushdown alphabet
 with $\Scall =\{c\}$, $\Sret =\{r\}$, and $\Sint=\{a,b,\i\}$.
Let us consider the timed languages $\TLangRec$, $\TLangPred$, and $\TLangCaller$  over $\Sigma$ in Example~\ref{example:ECNA}. We show the following result.

\setcounter{aux}{\value{proposition}}
\setcounter{proposition}{\value{prop-expressivenessPredRec}}

\begin{proposition} The language $\TLangRec$ is \emph{not} definable by  B\"{u}chi \ECNA\ which do \emph{not} use abstract recorder clocks,  $\TLangPred$  is \emph{not} definable by B\"{u}chi \ECNA\ which do \emph{not} use abstract predictor clocks, and $\TLangCaller$ is \emph{not} definable by   B\"{u}chi \ECNA\ which do \emph{not} use caller clocks. 
Moreover, the language $\TLangRec\cup \TLangPred\cup \TLangCaller$ is \emph{not} definable by  B\"{u}chi \ARCNA,
   B\"{u}chi \APCNA\ and   B\"{u}chi \CECNA.
\end{proposition}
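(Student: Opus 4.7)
The plan is a perturbation argument, carried out separately for each of the three languages. The unifying observation will be that each of $\TLangRec,\TLangPred,\TLangCaller$ comes with a \emph{critical} event-clock class such that, once that class is removed, no event clock at any position of any word of the language has value equal to the constrained difference $\tau_p-\tau_0=1$. I first isolate the critical class by a case analysis; then, assuming a B\"uchi \ECNA\ $\Au$ without that class accepts the target language, I will pick a generic $w$ in the language and shift its last structured event by a tiny $\epsilon$ to obtain a $w'$ outside the language that is still accepted by $\Au$.

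For the critical-class analysis: for $\TLangRec$ the critical class is \emph{abstract recorder}, since the \MAP\ of the first $a$ skips the well-matched middle block $c^n a^k r^n$ and visits every $b$-position, so $x_a^{\abs}$ at the last $b$ equals $\tau_p-\tau_0$; conversely the global recorder $x_a^{\Global}$ at the last $b$ stops at the last $a$ inside the matched block, $y_a^{\abs}$ at position $0$ is $\NULL$, and every caller clock is $\NULL$ because the word has no unmatched call. For $\TLangPred$, restricted to words with at least two initial $a$'s (to rule out the degenerate case where $x_a^{\Global}$ at the last $b$ already measures $\tau_p-\tau_0$), the critical class is \emph{abstract predictor}, and $y_b^{\abs}$ at the first $a$ equals $\tau_p-\tau_0$ since the \MAP\ of position $0$ contains the final $b$ as its first $b$-occurrence. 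For $\TLangCaller$ the critical class is \emph{caller}, because the prepended unmatched $c$ is the only call on the caller path of every later position and $x_c^{\caller}$ at the last $b$ equals $\tau_p-\tau_0$; note that the \MAP\ of position $0$ is the singleton $\{0\}$ and the abstract recorder of $c$ at the last $b$ reaches only the first matched $c$.

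Given $\Au$ lacking the critical class, I will pick $w$ in the language so that $\tau_p-\tau_0=1$, every event-clock value at every position (over the classes available to $\Au$) is strictly bounded away from the integer constants of $\Au$, and on the $\i^\omega$ suffix $\tau_{p+k}=\tau_p+k+\tfrac12$, so that only finitely many suffix-clock values stay below $K_\Au$. Fix an accepting run $\pi$ on $w$ and define $w'$ by replacing $\tau_p$ with $\tau_p+\epsilon$. Each altered event-clock value has shape $\tau_p-\tau_j$ with $j\neq 0$ or $\tau_i-\tau_p$ with $i>p$, and by the critical-clock analysis none of these coincides with $\tau_p-\tau_0$. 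Choosing $\epsilon$ smaller than the minimum distance from each altered value to the nearest constant of $\Au$ preserves every region, so $\pi$ is still accepting on $w'$; yet $w'$ violates the timing constraint, a contradiction.

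For the union statement, the untimed projections of the three languages are pairwise disjoint (distinguished by the first symbol, or by the relative placement of the $b^+$-block and the $r^+$-block), so perturbing a word's timing does not move it to a different component. Given any B\"uchi \ARCNA, \APCNA, or \CECNA\ purportedly accepting $\TLangRec\cup\TLangPred\cup\TLangCaller$, I pick one of the three languages whose critical class is disallowed by that subclass and run the perturbation on a word of it; the resulting $w'$ lies in the untimed projection of the chosen language but fails its timing constraint, hence lies outside the whole union. The main technical obstacle will be controlling the infinite suffix, where $x_b^{\Global}$ (and, when allowed, $x_b^{\abs}$) at every $i>p$ shifts by $\epsilon$: the suffix choice $\tau_{p+k}=\tau_p+k+\tfrac12$ puts all but finitely many of these values into the unbounded top region $(K_\Au,\infty)$, which absorbs any $|\epsilon|<\tfrac12$, while the remaining finite set admits a uniform lower bound on distance-to-boundary by genericity.
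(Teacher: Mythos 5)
Your proposal is correct and is essentially the paper's own argument: the paper likewise fixes, for each language, a pair of words that differ only in the timestamp of the last symbol of the structured prefix (there $0.9$ versus $1$, followed by a late $\i$-suffix), checks that every clock valuation outside the critical class is either unchanged, in $(0,1)$ for both words, or above the automaton's maximal constant for both, and then derives the union statement from the disjointness of the untimed projections — your generic-word-plus-$\epsilon$-perturbation with a fixed accepting run is the same indistinguishability idea in slightly different clothing. Two small slips to repair when writing it out, neither of which breaks your framework since you get to choose the witness word: caller clocks are \emph{not} all $\NULL$ on $\TLangRec$-words (positions nested inside matched calls have nonempty caller paths; what you actually need is that no caller path reaches position $0$, which is an internal symbol, so caller-clock values never equal $\tau_p-\tau_0$), and for $\TLangRec$ and $\TLangCaller$ you must also insist on at least two $b$'s — exactly as you insisted on two initial $a$'s for $\TLangPred$ — so that the global predictor $y^{\Global}_b$ at position $0$ does not measure the full span (the paper's witness words have two $b$'s for precisely this reason).
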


\setcounter{proposition}{\value{aux}}

\begin{proof} First, let us consider the timed language $\TLangRec$.  Recalling Example~\ref{example:ECNA}, $\TLangRec$   consists of all the timed words of the form $(\sigma,\tau)\cdot (\i^{\omega},\tau')$ such that $\sigma$ is a well-matched word of the form $a \cdot c^{+} \cdot a^{+} \cdot r^{+} \cdot b^{+}$ and the time difference in $(\sigma,\tau)$  between the first and last symbols is $1$. Let $v_1$ and $v_2$ be the finite timed words over $\Sigma$ of length $6$ defined as follows.
\begin{itemize}
  \item $v_1 = (a,0)\cdot (c,0.1) \cdot (a,0.1) \cdot (r,0.1) \cdot (b,0.1) \cdot (b,0.9)$.
  \item $v_2 = (a,0)\cdot (c,0.1) \cdot (a,0.1) \cdot (r,0.1) \cdot (b,0.1) \cdot (b,1)$.
\end{itemize}
For each $H\geq 1$, let $w_1^{H}= v_1\cdot (\i,H+2) \cdot (\i,H+3) \ldots$ and $w_2^{H}= v_2\cdot (\i,H+2) \cdot (\i,H+3) \ldots$. Let us denote by
$\val^{1,H}$ and $\val^{2,H}$ the event-clock  valuations over $C_\Sigma$
   associated with $w_1^{H}$ and $w_2^{H}$, respectively. By construction, the following easily follows for all positions $i\geq 0$ and   event-clocks $z\in C_\Sigma$ such that $z$ is \emph{not} an abstract recorder clock:
\begin{compactitem}
  \item either (i)  $\val^{1,H}_i(z)=\val^{2,H}_i(z)$, or (ii) $0<\val^{1,H}_i(z)<1$ and $0<\val^{2,H}_i(z)<1$, or (iii) $\val^{1,H}_i(z)>H$ and $\val^{2,H}_i(z)>H$.
\end{compactitem}
   Hence, clock constraints which do not use abstract recorder clocks and whose maximum constant is at most $H$ cannot distinguish the valuations $\val^{1,H}$ and $\val^{2,H}$. It follows that for each
    \ECNA\ $\Au$ over $\Sigma$ which does not use abstract recorder clocks and  has maximum constant $H$, $w_1^{H}\in \TLang(\Au)$ iff $w_2^{H}\in \TLang(\Au)$. On the other hand, by definition of the language  $\TLangRec$, for each $H\geq 1$,
   $w_2^{H}\in \TLangRec$ and $w_1^{H}\notin \TLangRec$. Hence, $\TLangRec$  is not definable by B\"{u}chi \ECNA\ which do not use abstract recorder clocks.\vspace{0.2cm}

Now, let us consider the timed language $\TLangPred$. Recall that $\TLangPred$  consists of all the timed words of the form $(\sigma,\tau)\cdot (\i^{\omega},\tau')$ such that $\sigma$ is a well-matched word of the form $a^{+}  \cdot  c^{+} \cdot b^{+} \cdot r^{+} \cdot b$ and the time difference in $(\sigma,\tau)$  between the two extreme symbols is $1$.
 Let $u_1$ and $u_2$ be the finite timed words over $\Sigma$ of length $6$ defined as follows.
\begin{itemize}
  \item $u_1 = (a,0)\cdot (a,0.1) \cdot (c,0.1) \cdot (b,0.1) \cdot (r,0.1) \cdot   (b,0.9)$.
  \item $u_2 = (a,0) \cdot (a,0.1) \cdot (c,0.1) \cdot (b,0.1) \cdot (r,0.1)   \cdot (b,1)$.
\end{itemize}
For each $H\geq 1$, let $r_1^{H}= u_1\cdot (\i,H+2) \cdot (\i,H+3) \ldots$ and $r_2^{H}= u_2\cdot (\i,H+2) \cdot (\i,H+3) \ldots$. By reasoning as for the the case of the language $\TLangRec$,  it easily follows that for each
 \ECNA\ $\Au$ over $\Sigma$ which does not use abstract predictor clocks and  has as maximum constant $H$, $r_1^{H}\in \TLang(\Au)$ iff $r_2^{H}\in \TLang(\Au)$. On the other hand, by definition of the language  $\TLangPred$, for each $H\geq 1$,
   $r_2^{H}\in \TLangPred$ and $r_1^{H}\notin \TLangPred$. Hence, $\TLangPred$  is not definable by   B\"{u}chi \ECNA\ which do not use abstract predictor clocks.

    The proof for the timed language $\TLangCaller$ is similar. Finally, we observe that by the above considerations, it follows that  $\TLangRec\cup \TLangPred\cup \TLangCaller$ is not definable neither by an  abstract-predicting  B\"{u}chi \ECNA\ nor
by an abstract-recording B\"{u}chi \ECNA\ nor by a caller B\"{u}chi \ECNA.\qed
\end{proof}

\section{Removal of abstract predictor clocks in nested \VPTA}\label{APP:RemovalAbstractPredictor}

In this section, we provide the details of the construction of the generalized B\"{u}chi nested  \VPTA\ $\Au_{y^{\abs}_b }$ described in Section~\ref{sec:DecisionProcedures} starting
from a generalized B\"{u}chi nested \VPTA\ $\Au=\tpl{\Sigma, Q,Q_{0},C\cup C_{st},\Gamma\cup\{\top\},\Delta,\mathcal{F}}$ such that the abstract predictor clock $x^{\abs}_b$ is in $C$.
For this, we need additional notation.

An \emph{obligation set} $O$ (for the fixed abstract predictor clock $y^{\abs}_b$ and the fixed generalized B\"{u}chi nested \VPTA\ $\Au$) is a set consisting of lower-bound
obligations $\succ $$\ell$ and upper-bound obligations  ($f$,$\prec$$u$), where $f\in \{\live,\first\}$, such that
$y_b^{\abs} \succ \ell$ and $y_b^{\abs} \prec u$ are associated to interval constraints of $\Au$, and $(f,\prec$$u), (f',\prec$$u)\in O$ implies $f=f'$.
For an obligation set $O$, $\live(O)$ is the obligation set consisting of the live upper-bound obligations of $O$.

Let us consider the \CARET\ formula \cite{AlurEM04} $\Eventually^{\abs} b$: $\Eventually^{\abs} b$ holds at position $i\geq 0$ if the \MAP\ visiting $i$ also visits a position $j\geq i$ where $b$ holds.  A \emph{check set} $H$ is a subset of $\{\call,\ret,\intA,b,\Eventually^{\abs} b,p_{\infty}\}$ such that
$H\cap \{\call,\ret,\intA\}$ is a singleton. Intuitively, a check set is exploited by $\Au_{y^{\abs}_b}$ for keeping track of: (i) the guessed type (call, return, or internal symbol) of the next
input symbol, (ii) whether the next input symbol is $b$, (iii) whether $p_{\infty}$ holds at the current position, and (iv) whether
 $\Eventually^{\abs} b$ holds at the current position.

Let $C_\new$ be the set of standard clocks consisting of the  fresh standard clocks $z_{\succ \ell}$ (resp., $z_{\prec u}$) for each lower-bound constraint  $y_b^{\abs}\succ \ell$  (resp., upper-bound constraint $y_b^{\abs} \prec u$) of $\Au$ involving $y_b^{\abs}$.
For an input symbol $a\in \Sigma$ and an obligation set $O$, we denote by $\con(O,a)$ the constraint over the new set $C_\new$ of standard clocks defined as:
$\con(O,a)=\true$ if either $O=\emptyset$ or $b\neq a$; otherwise, $\con(O,a)$ is obtained from $O$ by adding for each obligation $\succ \ell$ (resp., $(f,\prec$$ u)$) in $O$, the conjunct $z_{\succ \ell}\succ \ell$
(resp., $z_{\prec u}\prec u$).
The nested \VPTA\ $\Au_{y^{\abs}_b}$ is given by
\[
\Au_{y^{\abs}_b}= \tpl{\Sigma, Q',Q'_{0},C\setminus\{y^{\abs}_b\}\cup C_{st} \cup C_{\new},(\Gamma\times Q')\cup\{\bad,\top\},\Delta',\mathcal{F}'}
 \]
 The set $Q'$ of states
 consists of triples of the form $(q,O,H)$ such that $q$ is a state of $\Au$,
$O$ is an obligation set, and $H$ is a check set, while the set $Q'_0$ of initial states consists of states of the form $(q_0,\emptyset,H)$ such that $q_0\in Q_0$ (initially there are no obligations).

We now define the transition function $\Delta'$. For this, we first define a predicate $\Abs$ over tuples of the form $\tpl{(O,H),a, y_b^{\abs}\in I,\Res,(O',H')}$
where $(O,H),(O',H')$ are pairs of obligation sets and check sets, $a\in\Sigma$, $y_b^{\abs}\in I$ is a constraint of $\Au$ involving $y_b^{\abs}$, and $\Res\subseteq C_\new$. Intuitively, $O$ (resp., $H$) represents the obligation set (resp., check set)  at the current position $i$ of the input, $a$ is the input symbol associated with position $i$, $y_b^{\abs}\in I$ is the prediction about $y_b^{\abs}$ done by $\Au$ at position $i$,
$\Res$ is the set of new standard clocks reset by $\Au_{y^{\abs}_b}$ on reading $a$,
 and
$O'$ (resp., $H'$) represents the obligation set (resp., check set) at the position $j$ following $i$ along the \MAP\ visiting  $i$ (if $i$ is a call, then $j$ is the matching-return of $i$). Formally,
$\Abs((O,H),a, y_b^{\abs}\in I,\Res,(O',H'))$ iff the following holds:
  \begin{enumerate}
  \item ($p_{\infty}\in H$ iff $p_{\infty}\in H'$), $a\in \Scall$ (resp., $a\in \Sret$, resp. $a\in \Sint$) implies $\call\in H$ (resp., $\ret\in H$, resp., $\intA\in H$).
  \item $\Eventually^{\abs} b\in H$ iff ($b= a$ or $\Eventually^{\abs} b\in H'$), and ($\Eventually^{\abs} b \in H'$ iff $I\neq\{\NULL\}$).
  \item If $I=\{\NULL\}$, then   $O'=\live(O)$, $\Res =\emptyset$, and $b\neq a$ implies $O=\live(O)$. Otherwise, let
   $y_b^{\abs}\in I \equiv y_b^{\abs}\succ \ell \wedge y_b^{\abs} \prec u$.  Let $O''$ be $O$ if $b\neq a$, and $O''=\live(O)$ otherwise. Then, $O' = O'' \cup \{ \succ$$ \ell\} \cup \{ (f, \prec$$ u)\}$, where $f= \live$ if $(\live, \prec$$ u)\in O''$, and $f=\first$ otherwise. Moreover, $\Res\subseteq \{z_{\succ \ell},z_{\prec u}\}$, $z_{\succ \ell}\in\Res$, and $z_{\prec u}\in \Res$ iff either
    $\prec$$ u$ does not appear in $O$, or $b= a$ and $(\first,\prec$$ u)\in O$.
   \end{enumerate}
Condition~1 requires that the Boolean value of proposition $p_{\infty}$ is invariant along the positions of a \MAP, and
the current check set is consistent with the type (call, return, or internal symbol) of the current input symbol.
Condition~2 provides the abstract-local propagation rules of formula $\Eventually^{\abs} b$.
Finally, Condition~3 provides the rules for updating the obligations on moving to the abstract next position along the current \MAP\ and for resetting new clocks on reading the current input symbol
$a$. Note that if $I=\{\NULL\}$ and $b\neq a$, then the current obligation set must contain only live upper-bound obligations. If, instead,
$y_b^{\abs}\in I$ is equivalent to  $y_b^{\abs}\succ \ell \wedge y_b^{\abs} \prec u$, then   the clock $z_{\succ \ell}$ is  reset, while the clock
$z_{\prec u}$ is reset iff either there is no obligation $(f,\prec$$u)$ in $O$, or $b= a$  and the obligation $(f,\prec$$u)$ is fresh, i.e., $f=\first$.

  The transition function $\Delta'$ of $\Au_{y^{\abs}_b}$  is then defined as follows. Recall that we can assume that each clock constraint of $\Au$ is of the form
$\theta \wedge y_b^{\abs}\in I$, where $\theta$ does not contain occurrences of $y_b^{\abs}$.

\paragraph{Push transitions:} for each push transition $q\, \der{a,\theta \wedge y_b^{\abs}\in I, \Res,\push(\gamma)}\,q'$ of $\Au$, we have the push  transitions
$(q,O,H) \, \der{a,\theta\wedge \con(O,a), \Res\cup\Res',\push(\gamma')}\,(q',O',H')$ such that $b=a$ iff $b\in H$, and
\begin{enumerate}
\item Case $\gamma'\neq \bad $. Then,   $\gamma'=(\gamma,O_{\ret},H_{\ret})$  and
  \begin{compactitem}
 \item $\Abs((O,H),a,y_b^{\abs}\in I,\Res',(O_{\ret},H_{\ret}))$. Moreover, if $\ret\in H'$ then  $H_{\ret} =H'$ and $O'=O_{\ret}$; otherwise, $p_{\infty}\notin H'$  and $O'$ consists of the live  obligations
  $(\live,\prec $$u)$ such that  $(f,\prec $$u)\in O_{\ret}$ for some $f\in \{\live,\first\}$.
 \end{compactitem}
   \item Case $\gamma'= \bad$: $\call\in H$, $I=\{\NULL\}$, ($\Eventually^{\abs} b\in H$ iff $b= a$), $p_{\infty}\in H$, $p_{\infty}\in H'$, $\ret\notin H'$, $O'=\emptyset$, $\Res'=\emptyset$, and $b\neq  a$  implies $O=\emptyset$.
\end{enumerate}
Note that if $b= a$, the obligations in the current state are checked by the constraint on $C_\new$ given by $\con(O,a)$ (recall that if $b\neq a$, then $\con(O,a)=\true$). The push transitions of point~1 consider the case  where $\Au_{y^{\abs}_b}$ guesses that the current call position $i_c$ has a matching return $i_r$. In this case, the set of obligations and the check state for the next abstract position $i_r$ along the current \MAP\ are pushed on the stack in order to be recovered at the matching-return $i_r$. Moreover, if $\Au_{y^{\abs}_b}$ guesses that the next position $i_c+1$ is not $i_r$ (i.e., $\ret\notin H'$), then all the upper-bound obligations in $O_{\ret}$ are propagated as  live obligations at the next position $i_c+1$ (note that the \MAP\ visiting $i_c+1$ starts at $i_c+1$, terminates at $i_r-1$, and does not satisfy proposition $p_{\infty}$).
The push transitions of point~2 consider instead the case  where $\Au_{y^{\abs}_b}$ guesses that the current call position $i_c$ has no matching return $i_r$, i.e., $i_c$ is the last position of the current \MAP. In this case, $\Au_{y^{\abs}_b}$ pushes the symbol $\bad$ on the stack and the transition relation is consistently updated.

\paragraph{Internal transitions:} for each internal transition $q\, \der{a,\theta \wedge y_b^{\abs}\in I, \Res}\,q'$ of $\Au$, we add   the internal transitions
$(q,O,H) \, \der{a,\theta\wedge \con(O,a), \Res\cup\Res'}\,(q',O',H')$, where  $b=a$ iff $b\in H$, and
\begin{enumerate}
  \item Case $\ret\in H'$: $\intA\in H$,  $I=\{\NULL\}$, $\Res' = \emptyset$,  ($\Eventually^{\abs}b\in H $ iff $b= a$), and $b\neq  a$ implies $O=\live(O)$.
  \item Case $\ret\notin  H'$:  $\Abs((O,H),a, y_b^{\abs}\in I,\Res',(O',H'))$.
\end{enumerate}
In the first case, $\Au_{y^{\abs}_b}$ guesses that
the current internal position $i$ is the last one of the current \MAP ($\ret\in H'$), while in the second case  the current \MAP\ visits the next non-return position $i+1$.
Note that if $b= a$, the obligations in the current state are checked by the constraint $\con(O,a)$.
\paragraph{Pop transitions:} for each pop transition $q\, \der{a,\theta \wedge y_b^{\abs}\in I, \Res,\pop(\gamma)}\,q'\in \Delta_r$, we have the pop  transitions
$(q,O,H) \, \der{a,\theta\wedge \con(O,a), \Res\cup\Res',\pop(\gamma')}\,(q',O',H')$, where $b=a$ iff $b\in H$, and
\begin{enumerate}
  \item Case $\gamma\neq \top$: $\ret\in H$ and $\gamma'=(\gamma,(O,H))$.
  If $\ret\notin H'$, then  $\Abs((O,H),a, y_b^{\abs}\in I,\Res',(O',H'))$; otherwise, $I = \{\NULL\}$, $\Res'=\emptyset$,
  ($\Eventually^{\abs} b\in H$ iff $b= a$), and $b\neq  a$ implies $O= \live(O)$.
\item Case $\gamma = \top$: $\ret\in H$, $O=\emptyset$,  $\gamma'=\top$,   $p_{\infty}\in H$, and $p_{\infty}\in H'$.
If $\ret\notin H'$, then  $\Abs((O,H),a, y_b^{\abs}\in I,\Res',(O',H'))$; otherwise, $I = \{\NULL\}$, $\Res'=\emptyset$, $O'=\emptyset$, and
  ($\Eventually^{\abs}b\in H$ iff $b= a$).
\end{enumerate}
If $\gamma\neq \top$, then the current return position has a matched-call. Thus, $\Au_{y^{\abs}_b}$ pops from the stack $\gamma$ together with an obligation set and a check set, and verifies that the last two sets
correspond to the ones associated with the current control state.
If $\gamma = \top$, then the current position is also the initial position
of the associated \MAP.

Finally, the generalized B\"{u}chi condition $\mathcal{F}'$ of $\Au_{y^{\abs}_b}$ is defined as follows. For each B\"{u}chi component $F$ of $\Au$,
 $\Au_{y^{\abs}_b}$  has the B\"{u}chi component consisting of the states $(q,O,H)$ such that $q\in F$. Moreover,
  $\Au_{y^{\abs}_b}$ has an additional component consisting of the states $(q,O,H)$ such that $p_{\infty}\in H$, and either $\Eventually^{\abs}b\notin H$ or $b\in H$.
 Such a component  ensures that  the guesses about the matched calls are correct ($p_{\infty}$ occurs infinitely often), and that  the liveness requirement
$b$ of $\Eventually^{\abs}b$ is fulfilled whenever  $\Eventually^{\abs}b$ is asserted at a position of an infinite \MAP. Recall that in an infinite word over $\Sigma$, there are at most one
infinite  \MAP\ $\nu$ and $\nu$ visits only positions where $p_{\infty}$ holds; moreover,   each position $i$ greatest than the initial position
$i_0$ of $\nu$ is either a $\nu$-position, or a position where $p_{\infty}$ does not hold. If an infinite word has no infinite \MAP, then $p_{\infty}$ holds at infinitely many positions as well.

\section{Removal of abstract recorder clocks in nested \VPTA}\label{APP:RemovalAbstractRecorder}

In this section, we establish the following result.

\begin{theorem}[Removal of abstract recorder clocks] \label{theorem:removeAbstractRecorder}
Given a generalized B\"{u}chi nested  \VPTA\ $\Au$ with set of event clocks $C$   and an abstract recorder clock $x_b^{\abs}\in C$, one can construct in singly exponential time a generalized B\"{u}chi nested  \VPTA\ $\Au_{x_b^{\abs}}$ with set of  event clocks $C\setminus\{x_b^{\abs}\}$  such that
$\TLang(\Au_{x_b^{\abs}})=\TLang(\Au)$ and $K_{\Au_{x_b^{\abs}}}=K_{\Au}$. Moreover, $\Au_{x_b^{\abs}}$ has $O(n\cdot 2^{O(p)})$ states and $m + O(p)$ clocks, where $n$ is the number of $\Au$-states, $m$ is the number
of standard $\Au$-clocks, and $p$ is the number of \emph{event-clock} atomic constraints on  $x_b^{\abs}$ used by $\Au$.
\end{theorem}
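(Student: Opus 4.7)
The plan is to mirror the abstract-predictor construction in Appendix~\ref{APP:RemovalAbstractPredictor}, reversing the temporal orientation so that a recorder looks backward along the current \MAP\ rather than forward. As in that construction, I assume without loss of generality that each transition of $\Au$ carries exactly one atomic constraint $x_b^{\abs}\in I$ and, when $I\neq\{\NULL\}$, split it into a lower-bound part $x_b^{\abs}\succ\ell$ and an upper-bound part $x_b^{\abs}\prec u$; one fresh standard clock ($z_{\succ\ell}$ or $z_{\prec u}$) is introduced per atomic constraint, yielding $|C_\new|=O(p)$ and already establishing the $m+O(p)$ bound on the number of clocks.

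The intended invariant is that each fresh clock $z$ should hold the recorder value $x_b^{\abs}$ whenever the current \MAP\ has already visited a $b$-position. The naive rule ``reset $z$ at every $b$'' fails because between a call $i_c$ and its matching return $i_r$ the current outer \MAP\ is suspended: $b$-positions on the nested inner \MAP\ must not reset the outer recorder, while we still need to track recorders on the inner \MAP. The core observation is that standard clocks keep advancing with physical time, so if we symbolically preserve the interval of $z$ at the call and know the elapsed duration $\tau_{i_r}-\tau_{i_c}$, we can recover the correct resumed interval by shifting. This is precisely the stack-based clock-save/restore idea underlying the \DTPA-to-\PTA\ translation of~\cite{ClementeL15}, which I adapt here.

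Concretely, states of $\Au_{x_b^{\abs}}$ are triples $(q,S,H)$, where $q$ is an $\Au$-state, $S$ is a status assigning to each atomic constraint on $x_b^{\abs}$ either $\NULL$ (no $b$ yet on the current \MAP) or the interval, relative to the partition induced by the constants of $\Au$, in which the associated clock currently lies, and $H$ is a check set recording the guessed type (call/return/internal) of the next input symbol, just as in the predictor case. Since the number of statuses is $2^{O(p)}$ and $|H|=O(1)$, the state-space bound is $n\cdot 2^{O(p)}$. The transition function is defined mirroring the predictor case: on an input symbol $a$, each $z$ is reset iff $a=b$ and $S$ is updated accordingly; at a call the current $S$ (together with the interval snapshot of an auxiliary span clock used to measure the subcomputation duration) is pushed onto the stack and $S$ is re-initialised to ``no $b$ on the inner \MAP''; at the matching return the pushed data are popped and combined with the span clock's current value to reconstruct the outer status, which becomes the new $S$. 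Guesses about unmatched calls are handled exactly as in the predictor construction via a $\bad$ sentinel pushed onto the stack.

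The main obstacle will be verifying that the symbolic reconstruction at each matching return yields, for every atomic constraint, the correct interval of the recorder clock of the resumed outer \MAP, even when auxiliary span clocks are themselves recycled across arbitrary nesting depths via further stack snapshots. Once this invariant is established, language equivalence between $\Au$ and $\Au_{x_b^{\abs}}$ follows by a routine run-by-run simulation in both directions, and the stated exponential size bounds follow directly from the sizes of $S$ and $H$ and from $|C_\new|=O(p)$.
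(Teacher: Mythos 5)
There is a genuine gap, and it lies exactly where you flag it: the ``symbolic save/restore'' of recorder clocks across call/return pairs does not work, and the paper's construction deliberately avoids it. Your stack snapshot stores only the \emph{region} of $z$ at the call together with the region of an auxiliary span clock, but the region of a sum is not determined by the regions of the summands (e.g.\ $z\in(1,2)$ at the call and a span in $(0,1)$ give $z+\mathrm{span}\in(1,3)$, which does not decide a constraint $x_b^{\abs}\prec 2$ at the return). So the reconstructed status $S$ after the pop cannot soundly decide the clock constraints on the resumed outer \MAP. Storing the exact real value is precisely what a timed stack (\DTPA) does, and removing it is the nontrivial content of the construction in~\cite{ClementeL15}; invoking that paper does not repair the scheme, because its point is that one must \emph{not} try to restore clock values at pops. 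There is also a resource problem: at nesting depth $d$ there are $d$ simultaneously suspended outer \MAP{}s, each with its own pending span, so finitely many span clocks cannot be ``recycled across arbitrary nesting depths'' --- the recycling again presupposes restoring a real-valued clock from symbolic stack data.

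The paper's proof takes a different route that sidesteps restoration altogether. When $b$ occurs at a position $i$ of the current \MAP\ $\nu$, the simulating automaton \emph{guesses} which lower-bound and upper-bound constraints on $x_b^{\abs}$ will be used by $\Au$ along the portion of $\nu$ up to the next $b$ on $\nu$, resets the corresponding fresh clocks, and carries the guesses as obligations in the control state (pushing them at calls and recovering them at matching returns). Soundness across nested calls rests on a monotonicity argument rather than on reconstructing values: for a lower-bound constraint $x_b^{\abs}\succ\ell$, resetting $z_{\succ\ell}$ again at a $b$ inside an inner \MAP\ is harmless, since any successful check at an inner position (before $i_r$) implies the check for the earlier, outer $b$; dually, for an upper-bound constraint $x_b^{\abs}\prec u$ the clock $z_{\prec u}$ is \emph{not} reset for inner guesses when a live obligation from the caller \MAP\ is pending, because the outer check subsumes the inner ones (this is the $\first$/$\live$ distinction). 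Finally, your proposal also omits the acceptance-side bookkeeping the paper needs: the $p_\infty$ flag with a B\"uchi component certifying the guesses about unmatched calls, and additional B\"uchi components ensuring that pending obligations are eventually checked along the (unique) infinite \MAP. Without replacing your reconstruction step by such a guess-and-check, subsumption-based mechanism, the proposed automaton is not language-equivalent to $\Au$.
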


In the following, we illustrate the proof of Theorem~\ref{theorem:removeAbstractRecorder}.
Fix a generalized B\"{u}chi nested \VPTA\ $\Au=\tpl{\Sigma, Q,Q_{0},C\cup C_{st},\Gamma\cup\{\top\},\Delta,\mathcal{F}}$ such that $x^{\abs}_b \in C$.
We can assume that for each transition $\delta$ of $\Au$, there is exactly one atomic constraint $x^{\abs}_b\in I$ on $x^{\abs}_b$ used as conjunct in the clock constraint
of $\delta$.
We construct a generalized B\"{u}chi nested \VPTA\ $\Au_{x^{\abs}_b }$ equivalent to $\Au$ whose set of event clocks is  $C\setminus \{x^{\abs}_b\}$, and whose set of standard clocks is
$C_{st}\cup C_\new$, where $C_\new$ consists of the  fresh standard clocks $z_{\succ \ell}$ (resp., $z_{\prec u}$) for each lower-bound  constraint  $x_b^{\abs}\succ \ell$  (resp., upper-bound constraint $x_b^{\abs} \prec u$)
of  $\Au$ involving $x_b^{\abs}$.

We first explain the basic ideas of the translation. Note that a global recorder clock $x_b^{\Global}$ can be trivially converted in a standard clock by resetting it whenever $b$ occurs
along the input word. This approach is not correct for the abstract recorder clock $x_b^{\abs}$, since along a \MAP\ $\nu$, there may be consecutive positions $i_c$ and $i_r$
such that $i_c$ is a call with matching return $i_r$, and $b$ may occur along positions in $[i_c+1,i_r-1]$ which are associated with \MAP\ distinct from $\nu$.
Thus, as in the case of the abstract predictor  clock $y_b^{\abs}$, we replace $x_b^{\abs}$ with the  set $C_\new$ of fresh standard clocks defined above.
For a given infinite word $\sigma$ over $\Sigma$, a \MAP\ $\nu$ of $\sigma$ and a position $i$ of $\nu$, we denote by $\infix_b(\nu,i)$  the infix of $\nu$ defined as follows:
if there exists the smallest $b$-position $j >i$ visited by $\nu$, then $\infix_b(\nu,i)$ is the infix of $\nu$ between the next position of $i$ along $\nu$ and the position $j$; otherwise,
 $\infix_b(\nu,i)$ is the suffix of $\nu$ starting from the next position of $i$ along $\nu$ (note that in this case  $\infix_b(\nu,i)$ is empty if $i$ is the last position of $\nu$).
 The main idea of the construction is that when $b$ occurs at the current position $i$ of the input word, the simulating automaton $\Au_{x^{\abs}_b}$ guesses the set of lower-bound and upper-bound constraints on
 $x_b^{\abs}$ which will be used by $\Au$ along the portion $\infix_b(\nu,i)$ of the current \MAP.

  First, let us consider lower-bound constraints $x_b^{\abs} \succ \ell$. Assume that $b$ occurs at position $i$ of the input word for the first time and that $i$ is not the last position of the current \MAP\
  $\nu$ (hence, $\infix_b(\nu,i)$ is not empty). Then, $\Au_{x^{\abs}_b}$ guesses the set of lower-bound constraints  $x_b^{\abs} \succ \ell$  which will be used by $\Au$ along $\infix_b(\nu,i)$.
  For each  of such guessed constraints $x_b^{\abs} \succ \ell$, the associated new clock $z_{\succ \ell}$ is reset; moreover,
    if $i$ is not a call (resp., $i$ is a call), $\Au_{x^{\abs}_b}$ carries the obligation $\succ$$\ell$ in its control state (resp., pushes the obligation $\succ$$\ell$ onto the stack).
    On visiting the positions $j$ in $\infix_b(\nu,i)$, $\Au_{x^{\abs}_b}$ checks that the guess is correct by verifying that for the current lower-boud constraint
    $x_b^{\abs} \succ \ell'$ used by $\Au$, $\succ$$\ell'$ is in the current set of obligations, and constraint $z_{\succ \ell'}\succ\ell'$ holds. Moreoever, at position $j$,
   $\Au_{x^{\abs}_b}$ guesses whether the constraint $x_b^{\abs} \succ \ell'$ will  be again used along   $\infix_b(\nu,i)$, or not. In the first case, the obligation $\succ$$\ell'$ is kept, otherwise, it is discarded.
The crucial observation is that:
\begin{compactitem}
  \item If a call  $i_c\geq i$ occurs along $\nu$ before the last position (if any) of $\infix_b(\nu,i)$, we know that the next position of $i_c$  along $\nu$ is the matching return $i_r$ of $i_c$, $i_r$ is visited by $\infix_b(\nu,i)$,  and all the \MAP\ visiting positions
  $h\in [i_c+1,i_r-1]$ are finite and terminate at positions $k<i_r$. Thus, the fulfillment of a lower-bound constraint  $x_b^{\abs} \succ \ell$ asserted at a position of such \MAP\ always implies the fulfillment of the same  constraint when asserted at a position $j\geq i_r$ of $\infix_b(\nu,i)$. Thus, at the time of a guess (i.e., when a $b$ occurs) along a \MAP\ visiting positions in $[i_c+1,i_r-1]$, the clocks $z_{\succ \ell}$
  associated with the guessed lower-bound constraints $x_b^{\abs} \succ \ell$
 can be safely reset.
\end{compactitem}\vspace{0.2cm}

\noindent   At each position $i$, $\Au_{x^{\abs}_b}$ keeps track in its control state
of the lower-bound obligations for the part $\infix_b(\nu,i)$ of the current \MAP\ $\nu$. Whenever  a call $i_c$ occurs, the guessed lower-bound obligations for the matching return $i_r$ of $i_c$ are pushed on the stack in order to be recovered at position $i_r$. Moreover, if $i_c+1$ is not a return (i.e., $i_r\neq i_c+1$), then $\Au_{x^{\abs}_b}$ moves to a control state where the set of lower-bound obligations is empty (consistently with the fact that $i_c+1$ is the initial position of the \MAP\ visiting $i_c+1$).

The case for  upper-bound constraints $x_b^{\abs}  \prec u$ is symmetric. Whenever   $b$ occurs at a position $i$ of the input word which is not the last position of the current \MAP\ $\nu$ and
$\Au_{x^{\abs}_b}$ guesses that the constraint $x_b^{\abs}  \prec u$ will be used by $\Au$ along the infix $\infix_b(\nu,i)$, then, assuming that no obligation is currently associated to the constraint
$x_b^{\abs}  \prec u$,  $\Au_{x^{\abs}_b}$ resets the standard clock $z_{\prec u}$ and carries the fresh obligation ($\first$,$\prec$$u$) in its control state (resp., pushes the obligation ($\first$,$\prec$$u$) onto the stack) if $i$ is not a call (resp., $i$ is a call). When  at a position $j$ of the infix $\infix_b(\nu,i)$, $\Au$ uses the constraint $x_b^{\abs} \prec u$, $\Au_{x^{\abs}_b}$ checks that $(\first,\prec$$u)$ is in the current set of obligations, and that the constraint $z_{\prec u}\prec u$ holds. The obligation ($\first$,$\prec$$u$) is removed or confirmed, depending on whether $\Au_{x^{\abs}_b}$ guesses that
$x_b^{\abs} \prec u$ will be again used by $\Au$ along $\infix_b(\nu,i)$ or not.  Assume now that a call position $i_c\geq i$ occurs along $\nu$ before   the last position (if any) of $\infix_b(\nu,i)$, and let $i_r$ be the matching return of $i_c$. The important observation is that:
\begin{compactitem}
  \item the fulfillment of an upper-bound constraint  $x_b^{\abs} \prec u$ asserted at a position   $j\geq i_r$ of $\infix_b(\nu,i)$ always implies the fulfillment  of the same constraint when asserted at a position $h$ of a \MAP\ $\nu'$ visiting positions in $[i_c+1,i_r-1]$ such that $h$ is preceded along $\nu'$ by a position where $b$ occurs. Thus, if the constraint $x_b^{\abs} \prec u$ is guessed to hold  at a position   $j\geq i_r$ of $\infix_b(\nu,i)$, for the guesses on the constraint $x_b^{\abs} \prec u$ done by $\Au_{x^{\abs}_b}$ along the positions in $[i_c+1,i_r-1]$, the clock $z_{\prec u}$ is not reset at the times of the  guesses (i.e., when $b$ occurs along the positions in $[i_c+1,i_r-1]$ ).
\end{compactitem}\vspace{0.2cm}

\noindent   Whenever a
call $i_c$ occurs, the updated set $O$ of  upper-bound and lower-bounds obligations   is pushed onto the stack  in order to be recovered at the matching return $i_r$ of $i_c$.
Moreover, if $i_c+1$ is not a return (i.e., $i_r\neq i_c+1$), then $\Au_{x^{\abs}_b}$ moves to a control state where, while the set of lower-bound obligations is empty, the set  of upper-bound  obligations is
obtained from $O$ by replacing each upper-bound obligation ($f$,$\prec$$u$), where $f\in \{\live,\first\}$ with the live obligation ($\live$,$\prec$$u$). A live obligation ($\live$,$\prec$$u$) asserted at the initial position $i_c+1$ of the \MAP\  $\nu$ visiting $i_c+1$ (note that $\nu$ terminates at position $i_r-1$) is used by $\Au_{x^{\abs}_b}$ to remind that the clock $z_{\prec u}$ cannot be reset along $\nu$ when $b$ occurs. Intuitively, live upper-bound obligations are propagated from the caller \MAP\ to the called \MAP. Note that fresh upper-bound obligations $(\first,\prec$$ u)$ always refer to guesses done along the current \MAP\, and differently from the live upper-bound obligations, may be removed, when along the current \MAP, they are checked.

There are other technical issues to be handled. As for the construction associated to the automaton $\Au_{y^{\abs}_b}$  for an abstract predictor clock $y^{\abs}_b$, the automaton
  $\Au_{x^{\abs}_b}$ uses the special proposition $p_{\infty}$, and keeps track in its control state  of the guessed type (call, return, or internal symbol) of the next input symbol in order to check whether the current input position
   is the last one of the current \MAP.
Moreover, we have to ensure that the lower-bound obligations $\succ$$\ell$ (resp., the fresh upper-bound obligations $(\first,\prec$$ u)$) at the current position  $i$ are eventually checked, i.e., for the current \MAP\ $\nu$, $\infix_b(\nu,i)$ eventually visits a position $j$  where
the constraint $x^{\abs}_b \succ\ell$ (resp., $x^{\abs}_b \prec u$) is used.
For this, $\Au_{x^{\abs}_b}$ keeps track in its control state  of the guessed interval constraint
$x^{\abs}_b \in I$ used by $\Au$ on reading the next input symbol, and whether the guessed next input symbol is $b$.
Moreover, for each lower-bound obligation $\succ$$\ell$ (resp., fresh upper-bound obligations $(\first,\prec$$ u)$),
  $\Au_{x^{\abs}_b}$ exploits a B\"{u}chi component ensuring that along an infinite \MAP\ $\nu$, \emph{either} there are infinitely many occurrences of $b$-positions,\emph{ or} there are infinitely many occurrences of positions
  where an interval constraint $x^{\abs}_b \in I$  consistent with $x^{\abs}_b \succ\ell$ (resp., $x^{\abs}_b \prec u$) is used, \emph{or} there are infinitely many positions in $\nu$ where the set of obligations
  does not contain $\succ$$\ell$ (resp., $(\first,\prec$$ u)$).

We now provide the formal definition of $\Au_{x^{\abs}_b}$. To this end, we need additional notation. An \emph{obligation set} $O$ (for the fixed recorder event $x^{\abs}_b$) is a set consisting of lower-bound
obligations $\succ $$\ell$ and upper-bound obligations  ($f$,$\prec$$u$), where $f\in \{\live,\first\}$, such that
$x_b^{\abs} \succ \ell$ and $x_b^{\abs} \prec u$ are associated to interval constraints $x_b^{\abs}\in I$  of $\Au$, and $(f,\prec$$u), (f',\prec$$u)\in O$ implies $f=f'$.
For an obligation set $O$, $\live(O)$  consists of the live upper-bound obligations of $O$. Given an obligation set $O$ and an interval constraint
$x_b^{\abs} \in I$ of $\Au$, we say that $x_b^{\abs} \in I$ is \emph{consistent with $O$} if one of the following holds:
\begin{itemize}
  \item $I=\{\NULL\}$ and $O=\live(O)$.
  \item $x_b^{\abs}\in I \equiv x_b^{\abs}\succ \ell \wedge x_b^{\abs} \prec u$,  $\succ $$ \ell\in O$ and  $(f,\prec $$u)\in O$ for some $f\in \{\first,\live\}$.
\end{itemize}
  Let $\Phi(x^{\abs}_b)$ be the set of interval constraints of the form
 $x_b^{\abs} \in I$ used by $\Au$.
A \emph{check set} $H$ is a subset of $\{\call,\ret,\intA,p_{\infty},b\}\cup \Phi(x^{\abs}_b)$ such that
$H\cap \{\call,\ret,\intA\}$ and $H\cap \Phi(x^{\abs}_b)$ are singletons. We say that $H$ and an obligation set $O$ are \emph{consistent} if
the unique interval constraint in $H$ is consistent with $O$.
 For an interval constraint
 $x_b^{\abs} \in I$ used by $\Au$, let $\con(I)$ be the constraint over $C_\new$ defined as follows:
 $\con(I)= \true$ if $I=\{\NULL\}$, and $\con(I)= z_{\succ\ell} \succ \ell \wedge z_{\prec u} \prec u$ if  $x_b^{\abs}\in I \equiv x_b^{\abs}\succ \ell \wedge x_b^{\abs} \prec u$.
The nested \VPTA\ $\Au_{x^{\abs}_b}$ is given by
\[
\Au_{x^{\abs}_b}= \tpl{\Sigma, Q',Q'_{0},C\setminus\{x^{\abs}_b\}\cup C_{st} \cup C_{\new},(\Gamma\times Q')\cup\{\bad,\top\},\Delta',\mathcal{F}'}
 \]
 where the set $Q'$ of states consists of triples of the form $(q,O,H)$ such that $q$ is a state of $\Au$,
$O$ is an obligation set, $H$ is a check set, and $H$ and $O$ are consistent. The set $Q'_0$ of initial states consists of states of the form $(q_0,\emptyset,H)$ such that $q_0\in Q_0$ (initially there are no obligations). Note that for an initial state $(q_0,\emptyset,H)$, $(x_b^{\abs} \in \{\NULL\})\in H$ ($H$ and the obligation set $\emptyset$ are consistent).

We now define the transition function $\Delta'$. To this end, we first define a predicate $\AbsP$ over tuples of the form $\tpl{(O,H),a, x_b^{\abs}\in I,\Res,(O',H')}$
where $(O,H),(O',H')$ are pairs of obligation sets and check sets, $a\in\Sigma$, $x_b^{\abs}\in I$ is a constraint of $\Au$, and $\Res\subseteq C_\new$. Intuitively, $O$ (resp., $H$) represents the obligation set (resp., check set)  at the current position $i$ of the input, $a$ is the input symbol associated with position $i$, $x_b^{\abs}\in I$ is the constraint on $x_b^{\abs}$ used   by $\Au$ at position $i$,
$\Res$ is the set of new standard clocks reset by $\Au_{x^{\abs}_b}$ on reading $a$,
 and
$O'$ (resp., $H'$) represents the obligation set (resp., check set) at the position $j$ following $i$ along the \MAP\ visiting  $i$ (if $i$ is a call, then $j$ is the matching-return of $i$). Formally,
$\AbsP((O,H),a, x_b^{\abs}\in I,\Res,(O',H'))$ is true iff the following holds:
  \begin{enumerate}
  \item ($p_{\infty}\in H$ iff $p_{\infty}\in H'$), $a\in \Scall$ (resp., $a\in \Sret$, resp. $a\in \Sint$) implies $\call\in H$ (resp., $\ret\in H$, resp., $\intA\in H$).
  \item  $(x_b^{\abs}\in I)\in H$, and $H$ and $O$ are consistent (resp., $H'$ and $O'$ are consistent).
  \item Case $b= a$: $b\in H$, $(x_b^{\abs}\in \{\NULL\})\notin H'$ and $\Res$ is a subset of $C_\new$ such that $z_{\prec u}\in \Res$ implies $(\live,\prec$$ u)\notin O$. Moreover, $O'=\live(O)\cup O''$, where $O''$
  is obtained from $\Res$ by adding for each clock $z_{\succ \ell}\in\Res$ (resp., $z_{\prec u}\in \Res$), the obligation $\succ$$\ell$ (resp., the fresh obligation $(\first,\prec $$u$)).
  \item Case $b\neq a$: $b\notin H$, $\Res =\emptyset$.   If $I=\{\NULL\}$, then   $O'=O=\live(O)$ and $(x_b^{\abs}\in \{\NULL\})\in H'$. Otherwise,  let $x_b^{\abs}\in I \equiv x_b^{\abs}\succ \ell \wedge x_b^{\abs} \prec u$. Then, $(x_b^{\abs}\notin \{\NULL\})\in H'$, and  $O'$ is any obligation set obtained
  from $O$ by \emph{optionally} removing the obligation $\succ $$\ell$ (by Condition~2, $\succ $$\ell\in O$), and/or by \emph{optionally} removing the obligation
  $(\first,\prec$$ u)$ if  $(\first,\prec$$ u)\in O$.
   \end{enumerate}
Condition~1 requires that the Boolean value of proposition $p_{\infty}$ is invariant along the positions of a \MAP, and
the current check set is consistent with the type (call, return, or internal symbol) of the current input symbol. Condition~2 requires that the current check set is consistent with the costraint $x_b^{\abs}\in I$ currently used by $\Au$.
 Conditions~3 and~4 provide  the rules for updating the obligations on moving to the abstract next position along the current \MAP\ and for resetting new clocks on reading the current input symbol
$a$. Note that if $I=\{\NULL\}$ and $b\neq  a$, then the current obligation set must contain only live upper-bound obligations, and $(x_b^{\abs}\in \{\NULL\})\in H'$.

Given a state $(q,O,H)$ of $\Au_{x^{\abs}_b}$, we say that $(q,O,H)$ is \emph{terminal} if the following holds: if $x_b^{\abs}\in I$ is the unique constraint associated
with the check set $H$ and $x_b^{\abs}\in I \equiv x_b^{\abs}\succ \ell \wedge x_b^{\abs} \prec u$, then $O\setminus \{\succ$$ \ell,(\first,\prec$$ u) \}=\live(O)$. Intuitively, terminal states are associated with
input positions $i$ such that $i$ is the last position of the related \MAP.

 The transition function $\Delta'$ of $\Au_{x^{\abs}_b}$  is then defined as follows. Recall that we can assume that each clock constraint of $\Au$ is of the form
$\theta \wedge x_b^{\abs}\in I$, where $\theta$ does not contain occurrences of $x_b^{\abs}$.

\paragraph{Push transitions:} for each push transition $q\, \der{a,\theta \wedge x_b^{\abs}\in I, \Res,\push(\gamma)}\,q'$ of $\Au$, we have the push  transitions
$(q,O,H) \, \der{a,\theta\wedge \con(I), \Res\cup\Res',\push(\gamma')}\,(q',O',H')$ such that $b\in H$ iff $a=b$, and
\begin{enumerate}
\item Case $\gamma'\neq \bad $. Then,   $\gamma'=(\gamma,O_{\ret},H_{\ret})$  and
  \begin{compactitem}
 \item $\AbsP((O,H),a,x_b^{\abs}\in I,\Res',(O_{\ret},H_{\ret}))$. Moreover, if $\ret\in H'$ then  $H_{\ret} =H'$ and $O'=O_{\ret}$; otherwise, $p_{\infty}\notin H'$  and $O'$ consists of the live  obligations
  $(\live,\prec $$u)$ such that  $(f,\prec $$u)\in O_{\ret}$ for some $f\in \{\live,\first\}$.
 \end{compactitem}
   \item Case $\gamma'= \bad$: $\call\in H$, $(x_b^{\abs}\in I)\in H$, state $(q,O,H)$ is terminal,   $p_{\infty}\in H$, $p_{\infty}\in H'$, $\ret\notin H'$, $O'=\emptyset$, and $\Res'=\emptyset$.
\end{enumerate}
Note that if $I\neq \{\NULL\}$, then the constraint $x_b^{\abs}\in I$ is checked by the constraint  $\con(I)$ (recall that if $I= \{\NULL\}$, then $\con(I)=\true$). The push transitions of point~1 consider the case  where $\Au_{x^{\abs}_b}$ guesses that the current call position $i_c$ has a matching return $i_r$. In this case, the set of obligations and the check state for the next abstract position $i_r$ along the current \MAP\ are pushed on the stack in order to be recovered at the matching-return $i_r$. Moreover, if $\Au_{x^{\abs}_b}$ guesses that the next position $i_c+1$ is not $i_r$ (i.e., $\ret\notin H'$), then all the upper-bound obligations in $O_{\ret}$ are propagated as  live obligations at the next position $i_c+1$ (note that the \MAP\ visiting $i_c+1$ starts at $i_c+1$, terminates at $i_r-1$, and does not satisfy proposition $p_{\infty}$).
The push transitions of point~2 consider instead the case  where $\Au_{x^{\abs}_b}$ guesses that the current call position $i_c$ has no matching return $i_r$, i.e., $i_c$ is the last position of the current \MAP. In this case, $\Au_{x^{\abs}_b}$ pushes the symbol $\bad$ on the stack and the transition relation is consistently updated.
\paragraph{Internal transitions:} for each internal transition $q\, \der{a,\theta \wedge x_b^{\abs}\in I, \Res}\,q'$ of $\Au$, we add   the internal transitions
$(q,O,H) \, \der{a,\theta\wedge \con(I), \Res\cup\Res'}\,(q',O',H')$ such that $b\in H$ iff $a=b$, and
\begin{enumerate}
  \item Case $\ret\in H'$: $\intA\in H$,  $(x_b^{\abs}\in I)\in H$, state $(q,O,H)$ is terminal, and $\Res' = \emptyset$.
  \item Case $\ret\notin  H'$:  $\AbsP((O,H),a, x_b^{\abs}\in I,\Res',(O',H'))$.
\end{enumerate}
In the first case, $\Au_{x^{\abs}_b}$ guesses that
the current internal position $i$ is the last one of the current \MAP\ ($\ret\in H'$), while in the second case  the current \MAP\ visits the next non-return position $i+1$.
\paragraph{Pop transitions:} for each pop transition $q\, \der{a,\theta \wedge x_b^{\abs}\in I, \Res,\pop(\gamma)}\,q'\in \Delta_r$, we have the pop  transitions
$(q,O,H) \, \der{a,\theta\wedge \con(I), \Res\cup\Res',\pop(\gamma')}\,(q',O',H')$ such that $b\in H$ iff $a=b$, and
\begin{enumerate}
  \item Case $\gamma\neq \top$: $\ret\in H$, $\gamma'=(\gamma,(O,H))$, and  $(x_b^{\abs}\in I)\in H$.
  If $\ret\notin H'$, then  $\AbsP((O,H),a, x_b^{\abs}\in I,\Res',(O',H'))$; otherwise, $(q,O,H)$ is a terminal state and $\Res'=\emptyset$.
\item Case $\gamma = \top$: $\ret\in H$, $I=\{\NULL\}$,  $\gamma'=\top$,  $p_{\infty}\in H$,  $p_{\infty}\in H'$, and $O=\emptyset$.
If $\ret\notin H'$, then  $\AbsP((O,H),a, x_b^{\abs}\in I,\Res',(O',H'))$; otherwise,  $\Res'=\emptyset$ and $O'=\emptyset$.
\end{enumerate}
If $\gamma\neq \top$, then the current return position has a matched-call. Otherwise, the current position is also the initial position
of the associated \MAP.

Finally, the generalized B\"{u}chi condition $\mathcal{F}'$ of $\Au_{x^{\abs}_b}$ is defined as follows. For each B\"{u}chi component $F$ of $\Au$,
  $\Au_{x^{\abs}_b}$  has the B\"{u}chi component consisting of the states $(q,O,H)$ such that $q\in F$.
  Moreover,
  $\Au_{x^{\abs}_b}$ has an additional component consisting of the states $(q,O,H)$ such that $p_{\infty}\in H$.
 Such a component  ensures that  the guesses about the matched calls are correct. Finally,
 for each lower-bound constraint $x^{\abs}_b\succ \ell$ (resp., upper-bound constraint $x^{\abs}_b\prec u$) of $\Au$, $\Au_{x^{\abs}_b}$ has a B\"{u}chi component consisting of the states $(q,O,H)$ such that
 \begin{compactitem}
   \item  $p_{\infty}\in H$, and \emph{either} $b\in H$, \emph{or} the unique constraint in $H$ is equivalent to $x_b^{\abs}\succ \ell \wedge x_b^{\abs} \prec u'$ for some upper-bound
   $u'$, \emph{or} $\succ$$\ell\notin O$;
   \item (resp., $p_{\infty}\in H$, and \emph{either} $b\in H$, \emph{or} the unique constraint in $H$ is equivalent to $x_b^{\abs}\succ \ell' \wedge x_b^{\abs} \prec u$ for some lower-bound
   $\ell'$, \emph{or} $(\first,\prec$$ u)\notin O$).
 \end{compactitem}
 Thus, the above
B\"{u}chi component ensures that
  along an infinite \MAP\ $\nu$, \emph{either} there are infinitely many occurrences of $b$-positions,\emph{ or} there are infinitely many occurrences of positions
  where an interval constraint $x^{\abs}_b \in I$  consistent with $x^{\abs}_b \succ\ell$ (resp., $x^{\abs}_b \prec u$) is used, \emph{or} there are infinitely many positions in $\nu$ where the set of obligations
  does not contain $\succ$$\ell$ (resp., $(\first,\prec$$ u)$).

\section{Removal of caller event-clocks in nested \VPTA}\label{APP:RemovalCallerClocs}

In this section, we prove the following result.

\begin{theorem}[Removal of caller event-clocks] \label{theorem:RemovalCallerClocs}
Given a generalized B\"{u}chi nested  \VPTA\ $\Au$ with set of event clocks $C$   and a caller event-clock $x_b^{\caller}\in C$, one can construct in singly exponential time a generalized B\"{u}chi nested  \VPTA\ $\Au_{x_b^{\caller}}$ with set of  event clocks $C\setminus\{x_b^{\caller}\}$  such that
$\TLang(\Au_{x_b^{\caller}})=\TLang(\Au)$ and $K_{\Au_{x_b^{\caller}}}=K_{\Au}$. Moreover, $\Au_{x_b^{\caller}}$ has $O(n\cdot 2^{O(p)})$ states and $m + O(p)$ clocks, where $n$ is the number of $\Au$-states, $m$ is the number
of standard $\Au$-clocks, and $p$ is the number of \emph{event-clock} atomic constraints on $x_b^{\caller}$ used by $\Au$.
\end{theorem}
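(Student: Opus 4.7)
The plan is to mirror the approach used for the removal of abstract predictor and abstract recorder clocks (Appendices~\ref{APP:RemovalAbstractPredictor} and~\ref{APP:RemovalAbstractRecorder}), but adapt it to caller paths, which run backwards through the stack rather than forwards along a \MAP. Fix a generalized B\"{u}chi nested \VPTA\ $\Au$ with $x_b^{\caller}\in C$. Without loss of generality, each transition of $\Au$ uses a single atomic constraint $x_b^{\caller}\in I$ that is either $\{\NULL\}$ or equivalent to $x_b^{\caller}\succ \ell \wedge x_b^{\caller}\prec u$. For each such lower-bound (resp.\ upper-bound) constraint used by $\Au$, I introduce a fresh standard clock $z_{\succ \ell}$ (resp.\ $z_{\prec u}$); these form the set $C_{\new}$, so that $\Au_{x_b^{\caller}}$ has event-clock set $C\setminus\{x_b^{\caller}\}$ and standard-clock set $C_{st}\cup C_{\new}$.

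The central semantic observation is that the value of $x_b^{\caller}$ at position $i$ is determined entirely by the current call-stack content: it equals $\tau_i-\tau_p$, where $p$ is the topmost unmatched $b$-call on the stack at $i$ (and $\NULL$ if no such call exists). Consequently, $x_b^{\caller}$ behaves as a \emph{stack-aware} recorder clock. It ``resets'' whenever a $b$-call is pushed (that call becomes the new topmost $b$-call), and ``reverts'' to its pre-push value (with the intervening time added) whenever the topmost $b$-call is popped by its matching return. The first phenomenon is straightforward to simulate by resetting all clocks in $C_{\new}$ at every $b$-call push. The second is the new obstacle, absent in the two earlier constructions: the simulating automaton must remember, across an entire well-matched subword, enough information to restore the correct regional behaviour of the caller clock once the outer $b$-call is popped.

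To address this, whenever a $b$-call is pushed $\Au_{x_b^{\caller}}$ pushes onto the stack, together with the original stack symbol, a ``regional token'' encoding, for every atomic constraint $x_b^{\caller}\in I$ of $\Au$, the region in which the fresh clocks were lying \emph{just before} the reset, together with an auxiliary clock reset at the push that tracks the age of this call. At the matching return, the token is popped and its content is combined with the age-clock value via bounded-region arithmetic to produce a set of obligations, recorded in the control state, constraining the regions in which $x_b^{\caller}$ must lie at subsequent positions. These obligations are used to evaluate transitions of $\Au$ that mention $x_b^{\caller}\in I$ until either a fresh $b$-call push overwrites them or a further matching return of a topmost $b$-call triggers another restoration. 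Pushes and returns of non-$b$-calls and of non-topmost $b$-calls propagate the tokens and obligations unchanged. As in the earlier constructions, the state of $\Au_{x_b^{\caller}}$ also carries a guess of the type (call, return, internal) and the letter of the next input symbol, and the generalized B\"{u}chi condition is enriched with one component forcing every pending obligation to be eventually discharged. A straightforward counting argument then gives $O(n\cdot 2^{O(p)})$ states, $m+O(p)$ clocks, and $K_{\Au_{x_b^{\caller}}}=K_{\Au}$.

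The main obstacle will be proving correctness of the token-and-obligation bookkeeping at matching returns: one must verify that the regional arithmetic used to combine the saved regions with elapsed times — iterated across arbitrarily nested well-matched subwords — yields exactly the regions of the true caller clock at every subsequent position of the input. Because stack symbols carry only finitely much information and all auxiliary clocks introduced are standard, the correctness proof reduces to a two-way simulation between runs of $\Au$ and $\Au_{x_b^{\caller}}$: every accepting run of $\Au$ lifts to an accepting run of $\Au_{x_b^{\caller}}$ by choosing regional tokens that match the actual values of $x_b^{\caller}$ just before each $b$-call push, and conversely every accepting run of $\Au_{x_b^{\caller}}$ projects, via the $\Au$-component of its state, onto an accepting run of $\Au$.
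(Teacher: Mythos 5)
Your reduction of $x_b^{\caller}$ to ``time since the topmost pending $b$-call'' is correct, and your clock set $C_\new$ matches the paper's. The genuine gap is in your central mechanism: restoring the caller clock at the matching return of a $b$-call from a pushed ``regional token'' plus an age clock. Knowing only the \emph{region} of the old value $v=\tau_{q}-\tau_{p}$ (outer pending $b$-call $p$, inner $b$-call $q$) at the moment of the push, together with the exact elapsed time $d=\tau_i-\tau_q$, does not determine whether $v+d\succ\ell$ or $v+d\prec u$: e.g.\ $v\in(0,1)$ and $d=0.7$ leaves $v+d$ on either side of $1$. So the ``bounded-region arithmetic'' cannot ``yield exactly the regions of the true caller clock,'' and your two-way simulation breaks precisely at the restoration step (and the imprecision compounds over nested pending $b$-calls). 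The only way to measure $\tau_i-\tau_p$ exactly with standard clocks is to keep a clock reset at $p$ and never reset afterwards; but you reset \emph{all} fresh clocks at every $b$-call push, and keeping one age clock per pending $b$-call would require unboundedly many clocks (the stack of a nested \VPTA\ is untimed and its alphabet finite, so clock values cannot be parked there either).

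The paper avoids restoration altogether and argues in the opposite direction, mirroring the abstract-recorder construction: when a $b$-call $i_c$ is read, $\Au_{x_b^{\caller}}$ \emph{guesses} which lower/upper-bound constraints on $x_b^{\caller}$ the simulated automaton will use along the \MAP\ having $i_c$ as caller, resets $z_{\succ\ell}$ only for the guessed lower bounds, and carries the resulting obligations in the control state, pushing them at calls so they are recovered at matching returns. Correctness then rests on monotonicity/subsumption, not on reconstructing values: a lower-bound check satisfied with respect to a deeper (more recent) $b$-call implies the corresponding outer check, so $z_{\succ\ell}$ may be safely re-reset at nested $b$-calls; dually, an outer upper-bound obligation subsumes the inner ones, and the $(\live,\prec u)$ versus $(\first,\prec u)$ obligations are exactly the bookkeeping that forbids resetting $z_{\prec u}$ inside the nested scope. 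Eventual discharge of obligations and the correctness of the matched-call guesses are enforced by extra generalized B\"uchi components (via $p_\infty$), as in Appendices~\ref{APP:RemovalAbstractPredictor} and~\ref{APP:RemovalAbstractRecorder}. To repair your proof you would have to replace the token-and-restoration scheme by such a guess-and-check discipline; as written, the key correctness claim of your construction is false.
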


Fix a generalized B\"{u}chi nested \VPTA\ $\Au=\tpl{\Sigma, Q,Q_{0},C\cup C_{st},\Gamma\cup\{\top\},\Delta,\mathcal{F}}$ such that $x^{\caller}_b \in C$.
We construct a generalized B\"{u}chi nested \VPTA\ $\Au_{x^{\caller}_b }$ equivalent to $\Au$ whose set of event clocks is  $C\setminus \{x^{\caller}_b\}$, and whose set of standard clocks is
$C_{st}\cup C_\new$, where $C_\new$ consists of the  fresh standard clocks $z_{\succ \ell}$ (resp., $z_{\prec u}$) for each lower-bound  constraint  $x_b^{\caller}\succ \ell$  (resp., upper-bound constraint $x_b^{\caller} \prec u$)
of  $\Au$ involving $x_b^{\caller}$. Since a caller path from a position $j$ consists only of call positions except position $j$ (if $j\notin \Scall$), we assume that $b\in \Scall$
(the case where $b\notin\Scall$ is straightforward).

 The main idea of the construction is that whenever $b$ occurs at a call position $i_c$ of the input word, the simulating automaton $\Au_{x^{\caller}_b}$ guesses the set of lower-bound and upper-bound constraints on
 $x_b^{\caller}$ that will be used by $\Au$ along the \MAP\ $\nu$ having $i_c$ as caller. Note that such a \MAP\ is empty if $i_c+1$ is a return, and starts at position $i_c+1$ otherwise.

  First, let us consider lower-bound constraints $x_b^{\caller} \succ \ell$. Assume that $b$ occurs at a call position $i_c$ of the input word and $i_c+1$ is not a return. Let $\nu$ be the \MAP\ starting at position $i_c+1$. Then, $\Au_{x^{\caller}_b}$ guesses the set of lower-bound constraints  $x_b^{\caller} \succ \ell$  that will be used by $\Au$ along $\nu$. For each of such
    guessed constraints $x_b^{\caller} \succ \ell$, $\Au_{x^{\caller}_b}$ resets the associated new clock $z_{\succ \ell}$, and moves to the next position by carrying in the control state the new set of lower-bound obligations $\succ$$\ell$.
    On visiting the positions $j$ of $\nu$, $\Au_{x^{\caller}_b}$ checks that the guess is correct by verifying that for the current lower-bound constraint
    $x_b^{\caller} \succ \ell'$ used by $\Au$, $\succ$$\ell'$ is in the current set of obligations, and constraint $z_{\succ \ell'}\succ\ell'$ holds. Moreover, at position $j$,
   $\Au_{x^{\caller}_b}$ guesses whether the constraint $x_b^{\caller} \succ \ell'$ will  be again used along   $\nu$, or not. In the first case, the obligation $\succ$$\ell'$ is kept, otherwise, it is discarded.
   If a new call $n_c$ occurs along $\nu$ before the last position of $\nu$, then all the \emph{caller paths} starting from the positions   $h\in [n_c+1,n_r-1]$, where $n_r$ is the matching return of $n_c$ (i.e., $n_r$ is the position following $n_c$ along $\nu$), visit positions $i_c$ and  $n_c$ ($n_c> i_c$).  Thus, the fulfillment of a lower-bound constraint  $x_b^{\caller} \succ \ell$ asserted at a position $h\in [n_c+1,n_r-1]$  always implies the fulfillment of the same constraint when asserted at a position $j\geq i_r$ of $\nu$. Therefore, if $b$ occurs at the new call-position $n_c$,  the clocks $z_{\succ \ell}$
  associated with the guessed lower-bound constraints $x_b^{\caller} \succ \ell$ used by $\Au$ along the \MAP\ having $n_c$ as caller (such a \MAP\ starts at position $n_c+1$ and leads to position $n_r-1$) can be safely reset.

\noindent  Overall, at each position $i$, $\Au_{x^{\caller}_b}$ keeps track in its control state
whether the caller path from $i$ visits a $b$-position preceding $i$, or not. In the first case,
$\Au_{x^{\caller}_b}$ also keeps track in its control state of the
set of obligations associated with the guessed lower-bound constraints on $x^{\caller}_b$  which will be used by $\Au$ in the suffix of the current \MAP\ from position $i$.
In the second case, there are no obligations.
 Whenever  a matched call $i_c\geq i$ occurs along $\nu$, the guessed lower-bound obligations (if any) for the matching return $i_r$ of $i_c$ are pushed on the stack in order to be recovered at position $i_r$. Moreover, if $i_c+1$ is not a return (i.e., $i_r\neq i_c+1$), and either we are in the first case or $i_c$ is a $b$-position,  then $\Au_{x^{\caller}_b}$ guesses the set $L$ of lower-bound constraints which will be used by
 $\Au$ in the finite \MAP\ starting at position $i_c+1$, and moves to the next position by carrying in its control state the obligations associated with $L$.
 Additionally, if $i_c$ is a $b$-position, then for each $x_b^{\caller} \succ \ell\in L$,  the associated new clock $z_{\succ \ell}$ is reset.

The situation for  upper-bound constraints $x_b^{\caller}  \prec u$ is dual. In this case, as in the proof of Theorem~\ref{APP:RemovalAbstractRecorder}, we distinguish between
fresh upper-bound obligations ($\first$,$\prec$$u$) and live upper-bound obligations ($\live$,$\prec$$u$).
Fresh upper-bound obligations $(\first,\prec$$ u)$ always refer to guesses done along the current \MAP\, and differently from the live upper-bound obligations, may be removed, when along the current \MAP, they are checked.
Live upper-bound obligations ($\live$,$\prec$$u$) are propagated from the caller \MAP\ to the called \MAP. They are used by $\Au_{x^{\caller}_b}$ to remember that
at a matched $b$-call position $i_c$ along the current \MAP\ with matching return $i_r>i_c+1$, if the upper-bound constraint $x_b^{\caller}  \prec u$ is guessed to be used by $\Au$ along the finite \MAP\ $\nu'$ having as caller $i_c$ ($\nu'$ starts at $i_c+1$ and ends at $i_r-1$), and the guessed set of obligations for the matching return $i_r$ already contains an obligation ($f$,$\prec$$u$), then the clock  $z_{\prec u}$ must not be reset. This is safe since
the fulfillment of an upper-bound constraint  $x_b^{\caller} \prec u$ asserted at a position   $j\geq i_r$ along $\nu$ always implies the fulfillment  of the same constraint when asserted at a position $h$ of the \MAP\ $\nu'$.

The formal definition of $\Au_{x^{\caller}_b}$ is similar to that of the nested \VPTA\ $\Au_{x^{\abs}_b}$ exploited in the proof of Theorem~\ref{APP:RemovalAbstractRecorder}. Thus, here, we omit the details of the
construction.

\end{document}